\documentclass[aps,pra,10pt,superscriptaddress,twocolumn,nofootinbib]{revtex4-2}
\usepackage{amsmath,amsthm,amssymb,dsfont,enumitem}
\usepackage{orcidlink}
\newcommand{\ket}[1]{\left| #1 \right\rangle}
\newcommand{\bra}[1]{\left\langle #1 \right|}
\newcommand{\av}[1]{\langle #1\rangle}
\usepackage{hyperref}
\usepackage[capitalise]{cleveref}
\crefname{equation}{Eq.}{Eqs.}
\crefname{appendix}{App.}{Apps.}
\crefname{figure}{Fig.}{Figs.}

\newcommand{\R}{\mathbb{R}}
\newcommand{\C}{\mathbb{C}}
\newcommand{\Hq}{\mathbb{H}}
\newcommand{\Oc}{\mathbb{O}}
\newcommand{\Cl}{\mathrm{Cl}}
\newcommand{\sog}{\mathfrak{so}}
\newcommand{\tr}{\mathrm{tr}}
\newcommand{\vx}{\mathbf{x}}
\newcommand{\vy}{\mathbf{y}}
\newcommand{\vn}{\mathbf{n}}
\newcommand{\vm}{\mathbf{m}}
\newcommand{\ve}{\mathbf{e}}
\newcommand{\vf}{\mathbf{f}}
\newcommand{\vs}{\mathbf{s}}
\newcommand{\vb}{\mathbf{b}}
\newcommand{\1}{\mathds{1}}
\newcommand{\Spin}{\mathrm{Spin}}

\theoremstyle{plain}
\newtheorem{theorem}{Theorem}
\newtheorem{proposition}[theorem]{Proposition}
\newtheorem{lemma}[theorem]{Lemma}
\theoremstyle{definition}
\newtheorem{definition}[theorem]{Definition}
\theoremstyle{remark}
\newtheorem{remark}[theorem]{Remark}

\begin{document}

\title{Why three? A two-level system with four mutually unbiased questions}
\author{Jonte R. Hance\,\orcidlink{0000-0001-8587-7618}}
\email{jonte.hance@newcastle.ac.uk}
\affiliation{Quantum Group, School of Computing, Newcastle University, 1 Science Square, Newcastle upon Tyne, NE4 5TG, UK}

\begin{abstract}
Two sharp yes/no questions are mutually unbiased if and only if their involutions anticommute. A set of unbiased questions is therefore a Clifford algebra, whose maximal size is odd, meaning four anticommuting questions typically imply a fifth: the two-level systems of real, complex and quaternionic quantum theory allow two, three and five questions, while four is skipped. Without this operator product restriction, a two-level system with four unbiased questions exists as the four-dimensional Bloch ball, the hyperbit. The single-system postulates that pick out the Jordan state spaces allow such a model, and only energy observability rules it out. We show what it lacks, and realise it inside two qubits: its Kirkwood-Dirac imaginary parts are hidden $\mathfrak{so}(4)$ generators, its composites allow entanglement but no interaction, and its 24-cell of states stays Spekkens preparation contextual down to depolarising strength $2/3$. In fermionic terms, the four questions are the Majorana operators of two modes while the fifth is parity: the four-question system is therefore the sector removed by parity superselection.
\end{abstract}

\maketitle

\textit{Introduction---}
A qubit has three mutually unbiased bases (MUBs), and a $d$-level system has at most $d+1$ \cite{Ivanovic1981,WoottersFields1989,Durt2010}. The usual explanation is the dimension count $(d^2-1)/(d-1)=d+1$ over orthogonal subspaces of the traceless Hermitian operators~\cite{Durt2010}, which for $d=2$ says the Bloch ball is three-dimensional. This answers ``how many'' given standard QM, but not ``why three dimensions'' more abstractly, a question the reconstruction programme has taken up~\cite{Hardy2001,DakicBrukner2011,MasanesMuller2011,MMPA2014,MullerMasanes2013,KrummMuller2019,BMU2014,Muller2021}.

We instead pose the question operationally. A two-level system supports yes/no questions, and two such questions are unbiased if a system prepared such that it answers one with certainty answers the other at random. A qubit supports three mutually unbiased questions, $\sigma_x,\sigma_y,\sigma_z$, constrained by $\av{\sigma_x}^2+\av{\sigma_y}^2+\av{\sigma_z}^2\le1$. Can a two-level system support four, and how much of a qubit would it still be?

In this paper we show that the answer has an algebraic, a geometric and a physical form. Algebraically, unbiased questions are anticommuting involutions, so a set of them is a Clifford algebra, and the product of four is a fifth: the operator-algebraic two-level systems allow $2$, $3$ or $5$ questions but never $4$. Geometrically, once this product relationship is dropped, a four-question system exists as the four-dimensional Bloch ball~\cite{PawlowskiWinter2012,JNW1934}, which we call the hyperbit.\footnote{Paw\l{}owski and Winter~\cite{PawlowskiWinter2012} use ``hyperbit'' for a ball of unbounded dimension, the qubit and classical bit included; we borrow it for the $n=4$ case.} This satisfies the single-system postulates short of energy observability, but lacks four properties possessed by a qubit: dynamics generated by observables, closure under products, a higher-level relative, and an interacting composite. Physically, the four questions are the Majorana operators of two fermionic modes, the fifth being fermion parity, and the hyperbit is the sector parity superselection forbids.

We then show the consequences of this. The hyperbit's 24-cell of states splits into three frames of mutually unbiased questions related by triality, and stays Spekkens preparation contextual down to depolarising strength $2/3$. Inside two qubits, the four questions grade the Pauli basis by $\Lambda\R^4$: the degree two elements being the generators, which appear as its Kirkwood-Dirac imaginary parts and are free to take either sign, and the failure of degree one to close under products is why its composites inherit entangled states without admitting interaction.

Throughout this paper, $n$ is a ball dimension and $d$ a Hilbert-space one: the qubit is $d=2,n=3$, the hyperbit $d=2,n=4$.

\textit{Unbiased questions and Clifford algebras---}
Let $A,B$ be Hermitian involutions on $\C^d$ ($A^2=B^2=\1$), each a sharp yes/no question with outcomes $\pm1$ and projectors $P^A_\pm=(\1\pm A)/2$.

\begin{definition}
$A$ and $B$ are \emph{mutually unbiased} if every state in an eigenspace\footnote{Note this condition is on whole eigenspaces, not just one eigenbasis: for degenerate $A$ the weaker reading is not enough.On $\C^4$, $A=Z\otimes\1$ and $B=(Z\otimes X+X\otimes\1)/\sqrt2$ are traceless Hermitian involutions, every computational basis vector gives $B$ each outcome with probability $1/2$, and yet $AB+BA\neq0$.} of $A$ gives each outcome of $B$ with probability $1/2$, and vice versa.
\end{definition}

This is the two-outcome case of a mutually unbiased measurement~\cite{TavakoliMUM2021,FarkasKaniewskiNayak2023}, for which the second condition is also usually imposed, though \cref{lem:anticomm} shows the first implies it.

\begin{lemma}
\label{lem:anticomm}
$A$ and $B$ are mutually unbiased iff $AB+BA=0$.
\end{lemma}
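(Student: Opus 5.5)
The plan is to rewrite unbiasedness as a vanishing condition on a compressed operator, and then use the involution structure to pass from compressions to the full anticommutator. For a unit vector $\psi$, the probability of outcome $\pm1$ of $B$ is $\bra{\psi}P^B_\pm\ket{\psi}=(1\pm\bra{\psi}B\ket{\psi})/2$. Both outcomes are equiprobable exactly when $\bra{\psi}B\ket{\psi}=0$. So the first condition in the definition says $\bra{\psi}B\ket{\psi}=0$ for every $\psi$ in the range of $P^A_+$ or $P^A_-$. Since $P^A_\pm B P^A_\pm$ is Hermitian on that range and a Hermitian operator with vanishing quadratic form is zero, this is equivalent to
\begin{equation}
P^A_+BP^A_+=0,\qquad P^A_-BP^A_-=0 .
\end{equation}

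The key step is to express these in terms of $A$ and $B$. Expanding $P^A_\pm=(\1\pm A)/2$ gives
\begin{equation}
4P^A_\pm BP^A_\pm = B + ABA \pm (AB+BA).
\end{equation}
Adding and subtracting the two conditions shows they are jointly equivalent to $B+ABA=0$ together with $AB+BA=0$. Multiplying $B+ABA=0$ on the left by $A$ and using $A^2=\1$ gives $AB+BA=0$, so the two are the same condition. Hence the first condition alone is equivalent to $AB+BA=0$. This is what makes the whole-eigenspace reading essential: the compression to all of each eigenspace is what kills the full Hermitian operator. In the footnote's example, the compression $P^A_+ B P^A_+$ only has vanishing diagonal in one basis, which is why that reading fails.

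For the converse and the ``vice versa'' clause, the condition $AB+BA=0$ is symmetric in $A$ and $B$. Running the same computation with the roles swapped therefore gives the second condition from the first for free, which justifies the remark that the first condition implies the second. For the ``if'' direction, if $AB=-BA$ then $4P^A_\pm BP^A_\pm=B-BA^2\pm0=0$, so every state in either eigenspace of $A$ has $\av{B}=0$. By symmetry the same holds for every state in either eigenspace of $B$ with respect to $A$.

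The main obstacle is the step from the quadratic-form condition to the operator identity. It needs complex scalars, since on real spaces a vanishing quadratic form only kills the symmetric part. Here it is fine because $P^A_\pm BP^A_\pm$ is Hermitian (indeed real symmetric in the real case), so polarization applies in either setting. The only other point to watch is that the definition covers every state in the eigenspace, not just a basis of it.
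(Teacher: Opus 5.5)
Your proof is correct and is essentially the paper's: both reduce unbiasedness on the eigenspaces of $A$ to the vanishing of the Hermitian compressions $P^A_\pm BP^A_\pm$. Your identity $4P^A_\pm BP^A_\pm=B+ABA\pm(AB+BA)$ is an explicit form of the paper's observation that $B$ is then block off-diagonal, $B=P^A_+BP^A_-+P^A_-BP^A_+$, and the only cosmetic difference is that the paper proves the converse by the one-line eigenvector computation $\bra{\psi}B\ket{\psi}=\bra{\psi}BA\ket{\psi}=-\bra{\psi}B\ket{\psi}$ rather than by reusing that identity.
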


\begin{proof}
If $\bra\psi B\ket\psi=0$ on the $+1$ eigenspace of $A$, the Hermitian operator $P^A_+BP^A_+$ has vanishing quadratic form there and is zero; likewise $P^A_-BP^A_-=0$. Then $B=P^A_+BP^A_-+P^A_-BP^A_+$ and $AB=-BA$. Conversely, if $AB=-BA$ and $A\psi=\psi$ then $\bra\psi B\ket\psi=\bra\psi BA\ket\psi=-\bra\psi AB\ket\psi=-\bra\psi B\ket\psi$.
\end{proof}

Wehner and Winter showed anticommutation makes binary observables unbiased \cite{WehnerWinter2008,WehnerWinter2010}; the converse makes the two properties one. A set of pairwise unbiased questions is thus a set of pairwise anticommuting involutions $e_1,\dots,e_n$, a spin system in the Jordan-algebraic sense~\cite{HancheOlsenStormer1984,BGW2020} and a representation of the Clifford algebra $\Cl_n$ with $e_ie_j+e_je_i=2\delta_{ij}\1$, whose operator basis is graded by $\Lambda\R^n$~\cite{WehnerWinter2008}.

\begin{proposition}
\label{prop:count}
Write $d=m2^q$ with $m$ odd. The maximal number of pairwise unbiased sharp yes/no questions on $\C^d$ is $2q+1$, always odd: $3$ for $d=2$, $5$ for $d=4$, $7$ for $d=8$, $1$ for odd $d$.
\end{proposition}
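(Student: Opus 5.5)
By \cref{lem:anticomm}, the problem reduces to a question about representations: what is the largest $n$ for which $\C^d$ carries $n$ pairwise anticommuting Hermitian involutions, i.e.\ a $*$-representation of $\Cl_n$? For $n\ge2$ the questions are automatically traceless, since $\tr e_1=\tr(e_1e_2e_2)=-\tr(e_2e_1e_2)=-\tr e_1$. The plan has two halves: an upper bound from the structure of complex Clifford algebras, and an explicit construction that attains it.

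For the upper bound, I will use the standard classification. Over $\C$, $\Cl_{2k}\cong M_{2^k}(\C)$ is simple, so any representation is a direct sum of copies of the irreducible one on $\C^{2^k}$. This forces $2^k\mid d$ and hence $k\le q$.

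For odd $n=2k+1$ we have $\Cl_{2k+1}\cong M_{2^k}(\C)\oplus M_{2^k}(\C)$, split by the central element $\omega=i^{k}e_1\cdots e_{2k+1}$, which is a Hermitian involution. Any representation therefore again has dimension divisible by $2^k$, so $k\le q$ and $n\le 2q+1$. The case $n=2k+2$ is already covered by the even case: it requires $2^{k+1}\mid d$, i.e.\ $k+1\le q$, giving $n\le 2q$. So in all cases $n\le 2q+1$.

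To avoid citing the classification, I will instead prove the divisibility directly by induction. Given anticommuting $e_1,e_2$, the operator $e_1$ is traceless, so its $\pm1$ eigenspaces have equal dimension $d/2$, and $e_2$ swaps them. On the $+1$ eigenspace of $e_1$, define the operators $i e_2 e_j$ for $j\ge3$. These are Hermitian involutions, they commute with $e_1$, and they pairwise anticommute. This yields $n-2$ anticommuting questions on $\C^{d/2}$. Iterating this step gives $2^{\lfloor n/2\rfloor}\mid d$.

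Of these checks, showing that the $ie_2e_j$ are Hermitian, involutive and mutually anticommuting is the only place needing care with signs. I expect this to be routine. The genuine subtlety is the odd count, and the induction handles it: at the end, one leftover involution on an odd-dimensional space is allowed, and two are not, because two would force even dimension by the tracelessness argument.

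For the lower bound, I will take the Jordan--Wigner (Pauli) construction on $(\C^2)^{\otimes q}$. The operators
\[
Z^{\otimes j}\otimes X\otimes\1^{\otimes(q-j-1)},\qquad Z^{\otimes j}\otimes Y\otimes\1^{\otimes(q-j-1)},\qquad j=0,\dots,q-1,
\]
together with $Z^{\otimes q}$ give $2q+1$ pairwise anticommuting Hermitian involutions. Tensoring with $\1_m$ extends them to $\C^d$. For odd $d$ (so $q=0$), a single question such as $\mathrm{diag}(\1,-\1)$ exists and attains the bound of $1$.

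This establishes the maximum $2q+1$, which is always odd. Substituting $d=2,4,8$ gives $3,5,7$.
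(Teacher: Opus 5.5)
Your proposal is correct and follows the paper's argument. The paper likewise bounds $n$ through $2^{\lfloor n/2\rfloor}\mid d$, using the dimension of the irreducible representations of the complex Clifford algebra, and attains $2q+1$ by tensoring an irreducible representation with $\1_m$; your Jordan--Wigner operators together with $Z^{\otimes q}$ make that step explicit, while your induction (restricting the Hermitian involutions $ie_2e_j$, which commute with $e_1$, to the $+1$ eigenspace of $e_1$) is a valid self-contained substitute for the classification the paper cites from Shapiro, with the only slip being that for $d=1$ the single question must be $\pm1$ rather than $\mathrm{diag}(\1,-\1)$.
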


\begin{proposition}
\label{prop:completion}
For pairwise anticommuting involutions $e_1,\dots,e_n$ the product $p_n=e_1\cdots e_n$ is a further Hermitian involution anticommuting with all $e_i$, and so a further unbiased question, if and only if $n\equiv0\pmod4$.
\end{proposition}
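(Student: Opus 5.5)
The plan is to check the three required properties of $p_n=e_1\cdots e_n$ one at a time: Hermiticity, squaring to $\1$, and anticommuting with each $e_i$. Each reduces to a sign obtained by counting transpositions with $e_ie_j=-e_je_i$ for $i\neq j$. By \cref{lem:anticomm}, anticommutation with every $e_i$ is the same as $p_n$ being unbiased with respect to every $e_i$, so no further operational argument is needed.

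\emph{Hermiticity.} Since each $e_i$ is Hermitian, $p_n^\dagger=e_n\cdots e_1$. Reversing a product of $n$ pairwise anticommuting factors takes $n(n-1)/2$ adjacent swaps. Hence $p_n^\dagger=(-1)^{n(n-1)/2}p_n$, and $p_n$ is Hermitian exactly when $n\equiv0,1\pmod 4$. When $n\equiv2,3\pmod 4$ it is anti-Hermitian. An operator that is both Hermitian and anti-Hermitian is zero, and $p_n$ is nonzero because it is unitary. So in those cases Hermiticity genuinely fails.

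\emph{Involution.} Each $e_i$ is unitary, so $p_np_n^\dagger=\1$. Therefore $p_n^2=(-1)^{n(n-1)/2}p_np_n^\dagger=(-1)^{n(n-1)/2}\1$. In particular, $p_n$ is an involution whenever it is Hermitian.

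\emph{Anticommutation.} Moving $e_i$ through $p_n$ passes it past the $n-1$ factors $e_j$ with $j\neq i$, each contributing a sign, while $e_ie_i$ contributes none. This gives $e_ip_n=(-1)^{n-1}p_ne_i$. For odd $n$, $p_n$ therefore commutes with every $e_i$. If it also anticommuted with some $e_i$ we would get $2e_ip_n=0$, which is impossible since $e_ip_n$ is invertible. So anticommutation holds iff $n$ is even.

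Intersecting the conditions ($n$ even, and $n\equiv0,1\pmod4$) gives exactly $n\equiv0\pmod4$. In that case all three properties hold, which proves sufficiency. In every other residue class at least one property fails, as shown above, which proves necessity. The only delicate point is making the necessity direction airtight: one must rule out the possibility that a sign identity holds "accidentally". The invertibility of $p_n$ does this, as used in both the Hermiticity and anticommutation steps. I expect no other obstacle beyond careful sign bookkeeping.
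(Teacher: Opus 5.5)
Your proof is correct and follows the paper's own argument: reversion gives the sign $(-1)^{n(n-1)/2}$ for both $p_n^\dagger$ and $p_n^2$, and moving $e_i$ through $p_n$ gives the sign $(-1)^{n-1}$. Your use of invertibility to show that the failing cases genuinely fail (no accidental zero) makes explicit the necessity direction, which the paper leaves implicit.
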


\begin{proof}
Reversion gives $p_n^\dagger=(-1)^{n(n-1)/2}p_n$ and $p_n^2=(-1)^{n(n-1)/2}\1$, so $p_n$ is a Hermitian involution for $n\equiv0,1\pmod4$; and moving $e_i$ through $p_n$ passes $n-1$ anticommuting factors, so $p_n$ commutes with every $e_i$ for odd $n$ and anticommutes for even $n$.
\end{proof}

Wehner and Winter \cite{WehnerWinter2008,WehnerWinter2010} already count the pseudoscalar as a further anticommuting observable, giving $2k+1$ of them on $k$ qubits; the sign rule fixes which product it is, since their two papers take $p_n$ with and without a factor of $i$, and each is Hermitian only where the other is not. The real Clifford algebras for $n=2,\dots,5$ are $\Cl_n=M_2(\R),M_2(\C),M_2(\Hq),M_2(\Hq)^{\oplus2}$: at $n=2,3$ the bare product squares to $-\1$, generating rotations for the rebit and supplying the imaginary unit for the qubit; at $n=5$ it is central and merely labels the two irreducible representations. Only at $n\equiv0\pmod4$ is $p_n$ itself a further question; at $n\equiv2\pmod4$ it is $ip_n$ that is. Either way a set of four never closes, since by \cref{prop:count} a maximal set is odd: ``four unbiased questions'' completes to the quaternionic bit's five, and the two-level systems of real, complex and quaternionic quantum theory are the $2$-, $3$- and $5$-balls.

For $d>2$ questions and MUBs differ, a basis having $d$ outcomes and a question two; for $d=2$ they coincide, and everything below concerns questions.

\textit{The $n$-ball---}
Let us drop the product relationship and keep the convex structure. States are $\omega=(1,\vx)$ with $\vx\in\R^n$, $|\vx|\le1$; effects are $E=(a,\vb)$ with $E(\omega)=a+\vb\cdot\vx$ and $|\vb|\le a\le1-|\vb|$, the nontrivial extremal ones being $E^\pm_\vn=\tfrac12(1\pm\vn\cdot\vx)$ for unit $\vn$, which pair into the sharp question along $\vn$. Reversible transformations are $O(n)$, with identity component $SO(n)$.

The single-system features usually attributed to the qubit follow from this. Questions along $\vn$ and $\vm$ are unbiased iff $\vn\cdot\vm=0$, so a maximal unbiased set is an orthonormal frame with $n$ members, and the pure state $\vn$ answers the question along $\vm$ with $p(E^+_\vm)=\cos^2(\theta/2)$. Writing $\av{e_i}=x_i$, the Brukner-Zeilinger invariant $\sum_i\av{e_i}^2\le1$ holds with equality on pure states~\cite{BruknerZeilinger1999,BruknerZeilinger2009,PDB2010}: one bit shared among $n$ complementary questions. Perfectly distinguishable states are antipodal, so the information capacity is one bit, regardless the value of $n$~\cite{Massar2015}.

Other $B_n$-symmetric bodies carry the same $n$ questions, the cross-polytope $\sum_i|x_i|\le1$ among them; the ball is the only one whose pure states can be moved by a continuous group (\cref{prop:whyball}). This is the role of Ref.~\cite{MasanesMuller2011}'s symmetry requirement, called continuous reversibility in~\cite{Muller2021}.

The $n$-ball is the state space of the spin factor $V_n=\R\oplus\R^n$ with Jordan product $(s,\vx)\circ(t,\vy)=(st+\vx\cdot\vy,\,s\vy+t\vx)$ \cite{JNW1934}, a rank-two Euclidean Jordan algebra, with $V_2\cong H_2(\R)$, $V_3\cong H_2(\C)$, $V_5\cong H_2(\Hq)$ and $V_9\cong H_2(\Oc)$. Barnum, M\"uller and Ududec~\cite{BMU2014} showed that classical decomposability, strong symmetry and absence of third-order interference leave exactly the irreducible Jordan systems and the classical ones, a list including balls of every dimension: at the single-system level these postulates do not separate $n=4$ from $n=3$.

\textit{What the hyperbit lacks---}
The qubit possesses four properties the hyperbit lacks (\cref{tab:losses}). The first comes from dimension count; the other three come from closure.

\begin{enumerate}[label=(\Alph*)]
\item \emph{Observables do not generate dynamics:} $\dim\sog(n)=n(n-1)/2$ equals $n$ only for $n=3$: three is the only dimension in which the rotations of the ball can be labelled by points on the ball. The qubit's Hodge star identifies $\Lambda^2\R^3$ with $\R^3$, so every observable generates a rotation and every rotation is generated by an observable; the hyperbit's generators are the six bivectors $\ve_i\wedge\ve_j$, none of which are observables, and $\sog(4)\cong\mathfrak{su}(2)\oplus\mathfrak{su}(2)$ acts on the pure-state sphere $S^3$ by quaternion multiplication on both sides. Observability of energy, Barnum, M\"uller and Ududec's fourth postulate~\cite{BMU2014} which cuts the Jordan-algebraic list down to complex quantum theory, asks for an injection of generators into observables and so compares $n(n-1)/2$ with $\dim V_n=n+1$, which already excludes every $n\ge4$; they show it holds on the $n$-ball only for $n=3$. Alfsen and Shultz's dynamical correspondence~\cite{AlfsenShultz1998,AlfsenShultz2003} instead exists exactly when the algebra is the self-adjoint part of a C*-algebra, so not for the simple five-dimensional $V_4$.\label{PropertyAObsNoDynam}

\item \emph{No algebraic closure:} A Jordan algebra embedded in an associative algebra is \emph{reversible} if it is closed under the symmetrised products $a_1\cdots a_k+a_k\cdots a_1$, and $V_n$ has a reversible representation only for $n\in\{2,3,5\}$~\cite{HancheOlsen1983,HancheOlsenStormer1984,BGW2020}. \cref{prop:completion} obstructs this for $n=4$: $e_1e_2e_3e_4+e_4e_3e_2e_1=2p_4$ is a fifth question, and the smallest reversible Jordan algebra containing $V_4$ is $V_5=H_2(\Hq)$.\label{PropertyBNoAlgClos}

\item \emph{No higher-level relatives:} The rank-two faces of the simple Euclidean Jordan algebras of rank at least three are $V_2,V_3,V_5,V_9$ \cite{JNW1934,BGW2020}, and spin factors have no higher rank, so no three-level system satisfying Ref.~\cite{BMU2014};s postulates contains the hyperbit as a two-level subspace. This is a failure of Hardy's subspace axiom \cite{Hardy2001}.\label{PropertyCNoHLRels}

\item \emph{No interacting composites:} Masanes \emph{et al.}~\cite{MMPA2014} found no entangled states, or equivalently no interaction, for two $n$-balls with $n\neq3$ under tomographic locality, strict convexity and continuous reversibility; Krumm and M\"uller~\cite{KrummMuller2019} sharpened this to circuits, where for $n\neq3$ every gate is a product of single-wire gates.\label{PropertyDNoIntComps}
\end{enumerate}

Property~\ref{PropertyAObsNoDynam} fails for every $n\neq3$; local tomography fails for real and quaternionic theory too, despite both having interacting composites \ref{PropertyDNoIntComps}, though the composite of two quaternionic systems is a real one~\cite{Baez2012,BGW2020}. $n=2,3,5$ have \ref{PropertyBNoAlgClos} and \ref{PropertyCNoHLRels}, but $n=4$ has neither: a rank-$k$ parent supplies the associative product whose symmetrised form closes the algebra.

\begin{table}[t]
\caption{The four qubit properties the hyperbit lacks: \ref{PropertyAObsNoDynam} observables generate dynamics, \ref{PropertyBNoAlgClos} reversible closure, \ref{PropertyCNoHLRels} the division algebra $\mathbb F$ of a higher-rank parent $H_k(\mathbb F)$, \ref{PropertyDNoIntComps} an interacting composite, \cite{BGW2020}'s canonical property. LT marks local tomography of that composite, except in the hyperbit's row, which refers to \cref{prop:compositeprops}'s composite. $V_9$ has a rank-three parent only; where \ref{PropertyBNoAlgClos} fails the canonical composite enlarges the one-sided observables, to $V_5$ for the hyperbit.}
\label{tab:losses}
\begin{ruledtabular}
\begin{tabular}{lccccc}
$n$, system & (A) & (B) & (C) & LT & (D) \\
\hline
2, rebit & no & yes & $\R$ & no & yes \\
3, qubit & yes & yes & $\C$ & yes & yes \\
4, hyperbit & no & no & none & yes & enlarged \\
5, $\Hq$-bit & no & yes & $\Hq$ & no & yes \\
9, $\Oc$-bit & no & no & $\Oc$ & no & enlarged \\
\end{tabular}
\end{ruledtabular}
\end{table}

\textit{Polytopes, quasiprobabilities and contextuality---}
Two features of the hyperbit have no qubit analogue: the geometry of its frames, and the Spekkens contextuality of the subtheory they generate.

For the qubit the discrete structure is the octahedron of MUB eigenstates and the cube of magic states; the hyperbit's analogues are larger~\cite{Coxeter1973}. Write $\ve_1,\dots,\ve_4$ for the standard basis and $\vs\in\{\pm1\}^4$ for sign strings. Frame $\mathcal A=\{\pm\ve_i\}$ is the 16-cell of eigenstates of a maximal unbiased set, and the Hadamard states $\vs/2$ are the tesseract, at overlap $\pm\tfrac12$ with every axis. Its even- and odd-parity halves $\mathcal B=\{\vs/2:\prod_is_i=+1\}$ and $\mathcal C=\{\vs/2:\prod_is_i=-1\}$ are each a 16-cell too. The union $\mathcal A\cup\mathcal B\cup\mathcal C$ is the 24-cell, the binary tetrahedral group of unit quaternions.

\begin{proposition}
\label{prop:24cell}
$\mathcal A,\mathcal B,\mathcal C$ are orthonormal frames of $\R^4$ whose union is the 24-cell; vectors from different frames have dot product $\pm\tfrac12$. The symmetry group $W(F_4)$ of the 24-cell, of order $1152$, permutes the three frames as the full $S_3$; the hyperoctahedral group $B_4$, of order $384$, is the stabiliser of a single frame, and the subgroup fixing all three is $W(D_4)$, of order $192$, so the permutation action is $D_4$ triality. The 24-cell contains sixteen regular hexagons, each taking one antipodal pair from each frame.
\end{proposition}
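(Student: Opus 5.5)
Most of this is direct computation, and I would start there. Each of $\mathcal A,\mathcal B,\mathcal C$ has eight unit vectors ($|\vs/2|=1$). Take two members of $\mathcal B$, say $\vs/2$ and $\vs'/2$, both with even parity. Their dot product is $\tfrac14\sum_i s_is'_i$. Since $\vs$ and $\vs'$ have equal parity, they differ in an even number of places $k\in\{0,2,4\}$, so the dot product is $1,0,-1$. Hence $\mathcal B$ is $\pm$ an orthonormal frame, and the same argument works for $\mathcal C$. If the strings have opposite parity, $k$ is odd, so the dot product is $\pm\tfrac12$. Between $\mathcal A$ and either Hadamard frame we have $\ve_i\cdot\vs/2=s_i/2=\pm\tfrac12$. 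The union $\mathcal A\cup\mathcal B\cup\mathcal C$ is exactly the standard coordinate description of the 24 vertices of the 24-cell, namely $\pm\ve_i$ together with $(\pm\tfrac12,\pm\tfrac12,\pm\tfrac12,\pm\tfrac12)$, as already noted before the proposition.

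For the groups, I would take $|W(F_4)|=1152$ and the fact that $W(F_4)$ is the full symmetry group of the 24-cell as known (Coxeter). Any symmetry is orthogonal and preserves the 24 vertices and their dot products. The three frames are the equivalence classes of the relation ``dot product $\in\{0,\pm1\}$'', so symmetries permute the frames. The stabiliser of $\mathcal A$ contains $B_4$, meaning signed permutations of the coordinates. Conversely, any orthogonal map preserving $\{\pm\ve_i\}$ is a signed permutation, so the stabiliser equals $B_4$, of order $2^4\cdot4!=384$. By orbit–stabiliser the orbit of $\mathcal A$ has size $1152/384=3$, so $W(F_4)$ acts transitively on the three frames.

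Inside $B_4$, a signed permutation multiplies $\prod_is_i$ by the product of its signs. So the even-sign subgroup $W(D_4)$, of order 192, fixes $\mathcal B$ and $\mathcal C$, while odd-sign elements swap them. This gives a transposition in the image. Together with transitivity, the image of $W(F_4)\to S_3$ is all of $S_3$. The kernel has order $1152/6=192$ and is contained in $B_4$ (it fixes $\mathcal A$), and it contains $W(D_4)$. Matching orders, the kernel equals $W(D_4)$, which is the triality statement. The main subtle point is exhibiting that transitivity, i.e.\ the index computation, rests on $|W(F_4)|=1152$. If I wanted to avoid citing it, I would instead give an explicit element mapping $\mathcal A\to\mathcal B$: the Hadamard-type orthogonal matrix $H=\tfrac12\begin{pmatrix}1&1&1&1\\1&1&-1&-1\\1&-1&1&-1\\1&-1&-1&1\end{pmatrix}$. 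I would check that it preserves the vertex set, mapping the $\ve_i$ to even-parity half-vectors and permuting the rest.

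For the hexagons, take a vertex $\vx$ in frame $\mathcal A$. Its neighbours at angle $60^\circ$ (dot product $\tfrac12$) are 8 vertices, 4 from $\mathcal B$ and 4 from $\mathcal C$. A regular hexagon through $\vx$ in a plane through the origin consists of $\pm\vx,\pm\vy,\pm(\vy-\vx)$ with $\vx\cdot\vy=\tfrac12$. If $\vy\in\mathcal B$, then $\vy-\vx$ has unit length and dot product $\pm\tfrac12$ with both $\vx$ and $\vy$, so it is a vertex and lies in the remaining frame $\mathcal C$. For example, $\vx=\ve_1$, $\vy=(1,1,1,1)/2$ gives $\vy-\vx=(-1,1,1,1)/2$, which is odd. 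So each hexagon takes one antipodal pair from each frame. To count them, fix the pair $\pm\vx$ in $\mathcal A$ (4 choices) and a pair $\pm\vy$ in $\mathcal B$ at dot product $\pm\tfrac12$. Every pair in $\mathcal B$ qualifies, giving 4 choices. That yields $4\times4=16$ hexagons, each determined by its $\mathcal A$-pair and $\mathcal B$-pair, so none is counted twice.
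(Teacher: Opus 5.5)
Your argument is correct. For the frame statements it is the paper's own sign-counting argument. Beyond that you prove what the paper does not. The paper cites Coxeter for the group statements and says only that the hexagon count and frame composition ``were verified by enumeration''; you give pen-and-paper proofs of both.

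For the groups, your parity observation gives the kernel of the action on frames directly, with no order-matching: it is the set of elements of $B_4$ fixing $\mathcal B$, i.e.\ $W(D_4)$. Your explicit orthogonal involution $H$ swaps $\mathcal A$ and $\mathcal B$ and maps $\mathcal C$ to itself. Together with an odd-sign element, the image is generated by two transpositions and so is $S_3$. Then $|W(F_4)|=6\cdot192=1152$ comes out as a consequence rather than an input.

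For the hexagons, the parametrisation $\{\pm\vx,\pm\vy,\pm(\vy-\vx)\}$ explains both $16=4\times4$ and the one-pair-per-frame composition structurally, which enumeration does not. The paper's route buys brevity; yours is self-contained.

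Two steps need tightening.

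First, ``$\vy-\vx$ has unit length and dot product $\pm\tfrac12$ with $\vx$ and $\vy$, so it is a vertex'' does not follow from those facts alone. Compute instead. If $\vx=\epsilon\ve_i$ and $\vy=\vs/2$ with $\vx\cdot\vy=\tfrac12$, then $s_i=\epsilon$. So $\vy-\vx$ is $\vy$ with its $i$th sign flipped, and hence lies in $\mathcal C$ by parity.

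Second, your count assumes two things: every regular hexagon on the vertices is central, and it contains an $\mathcal A$-pair.
\begin{itemize}
\item Centrality: the vertex distances are $1,\sqrt2,\sqrt3,2$. A regular hexagon of side $r$ needs $r,\sqrt3\,r,2r$ all among them, which forces $r=1$. Its circumradius is then $1$, so its circle is a great circle and its plane passes through the origin.
\item Frame composition: in any great hexagon $\{\pm\mathbf u,\pm\mathbf w,\pm(\mathbf w-\mathbf u)\}$, the three antipodal pairs have mutual dot products $\pm\tfrac12$. They therefore lie in three different frames, whichever frame $\mathbf u$ belongs to.
\end{itemize}
So every hexagon has exactly one pair from each frame. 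Your bijection with ($\mathcal A$-pair, $\mathcal B$-pair) then accounts for all of them, not just those you built from a vertex of $\mathcal A$.
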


The frame statements are a count of sign agreements: $\ve_i\cdot\vs/2=s_i/2$, distinct strings of equal parity differ in an even number of places and give $\vs\cdot\vs'/4\in\{0,-1\}$, and strings of opposite parity differ in an odd number and give $\pm1/2$. The group statements are standard \cite{Coxeter1973}, and the hexagon count and its frame composition were verified by enumeration. One hexagon, in the plane of $\ve_1$ and $(0,1,1,1)/\sqrt3$, is
\begin{equation}
\label{eq:hexagon}
\begin{split}
\ve_1,\ \tfrac12(+{+}{+}+),\ \tfrac12(-{+}{+}+),\\ -\ve_1,\ \tfrac12(-{-}{-}-),\ \tfrac12(+{-}{-}-),
\end{split}
\end{equation}
alternating between $\mathcal A$, $\mathcal B$ and $\mathcal C$.

Each frame also gives a self-dual quasiprobability representation, $W(\vx)=(1+2\,\vn\cdot\vx)/8$ at each of its eight vectors $\vn$, the $n=4$ counterpart of Wootters' discrete Wigner function \cite{Wootters1987,GHW2004} (see Appendix~\ref{app:wigner}). One frame's representation is nonnegative exactly on the other two frames and their hull (\cref{thm:wigner}), so triality exchanges the roles of stabiliser and magic states; for the qubit that set is the cube, the magic states of the stabiliser polytope.

The cross-polytope subtheory of the $n$-ball, its states $\pm\ve_i$ with the $n$ axis questions and their mixtures, has a preparation- and measurement-noncontextual model on $\{\pm1\}^n$ (\cref{prop:16cell}); for $n=3$ it is an eight-state relative of Spekkens' toy bit~\cite{Spekkens2007}. The 24-cell behaves differently.

\begin{theorem}
\label{thm:24contextual}
The 24-cell subtheory of the hyperbit, with the $24$ states of $\mathcal A\cup\mathcal B\cup\mathcal C$, the $12$ axis questions of the three frames and their mixtures, is preparation contextual.
\end{theorem}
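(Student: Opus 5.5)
The plan is the standard route for preparation contextuality of sharp two-outcome questions. First I would show that preparation noncontextuality (PNC) forces outcome determinism on the support of the maximally mixed state. Then I would reduce the statement to the infeasibility of a finite linear system.

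Suppose an ontological model assigns distributions $\mu_\vx$ to the $24$ states and response functions $\xi(\pm\mid\lambda,Q_\vn)$ to the $12$ axis questions. For every $\vn$ in any of the frames $\mathcal A,\mathcal B,\mathcal C$, the centre $\vx=0$ equals $\tfrac12(\vn)+\tfrac12(-\vn)$. PNC then gives a single $\mu_0=\tfrac12(\mu_\vn+\mu_{-\vn})$ for all twelve axes. Since the state $\vn$ answers $Q_\vn$ with certainty, $\xi(+\mid\lambda,Q_\vn)=1$ almost everywhere on $\mathrm{supp}\,\mu_\vn$ and $=0$ on $\mathrm{supp}\,\mu_{-\vn}$. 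Hence, on $\mathrm{supp}\,\mu_0$, every $\lambda$ carries a sign vector $x(\lambda)\in\{\pm1\}^{12}$, one value per axis question. Every state $\vn$ of the 24-cell satisfies $\mu_\vn\le2\mu_0$, so it lives on this deterministic region. The model therefore reduces to a probability measure $q$ on $\{\pm1\}^{12}$ (the pushforward of $\mu_0$) together with, for each of the 24 states, a measure $q_\vn\le 2q$ supported on $\{x:x_\vn=+1\}$. The antipodal pair must satisfy $q_\vn+q_{-\vn}=2q$. Each $q_\vn$ must also reproduce the Born-type statistics $q_\vn(x_\vm=+1)=\tfrac12(1+\vn\cdot\vm)$ for all twelve $\vm$, and these take the values $1,\tfrac12,\tfrac34,\tfrac14,0$ by \cref{prop:24cell}.

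Further PNC constraints come from the hexagons of \cref{prop:24cell}. In the hexagon \cref{eq:hexagon}, $\ve_1=\tfrac12(+{+}{+}+)-\tfrac12(-{+}{+}+)$, so the equal mixtures of $\{\ve_1,\vc,-\vb\}$ and $\{-\ve_1,-\vc,\vb\}$ are both the centre. PNC then demands $\mu_{\ve_1}+\mu_\vc+\mu_{-\vb}=3\mu_0$, and similarly for all sixteen hexagons. Restricted to one hexagon's three questions, a short computation with $q_\vn\le3q$ shows that each of the six ``hexagon patterns'' of $(x_{\ve_1},x_\vb,x_\vc)$ carries $q$-mass at least $1/6$. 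That exhausts $q$ and pins down $q$ exactly on those three coordinates. A single hexagon alone is consistent, so the contradiction must come from combining hexagons that share axes. Every antipodal pair lies in several hexagons, and each forces the marginal of $q$ on its three axes to be uniform over six fixed patterns, with the two "all-equal-to-parity" patterns excluded. I would show that these marginal constraints cannot be simultaneously satisfied by any distribution on $\{\pm1\}^{12}$. Concretely, I would pick two or three hexagons through $\ve_1$ whose other axes interlock, for example using a $\mathcal B$ axis and a $\mathcal C$ axis from hexagons in orthogonal planes. The aim is a Bell/KS-type parity clash: summing the excluded-pattern conditions over a cycle of hexagons should force $\prod x$ around the cycle to be both $+1$ and $-1$.

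The main obstacle is this last step: finding the right small family of hexagons whose pattern constraints are jointly infeasible, and certifying it. I would first try a cyclic chain of hexagons linking $\mathcal A\to\mathcal B\to\mathcal C\to\mathcal A$ via triality and track the sign products. If that does not close, I would fall back on writing the full system as a linear program in the $2^{12}$ weights of $q$ and exhibiting a Farkas dual certificate. That certificate is naturally a noise-robust inequality, and its violation threshold should reproduce the depolarising strength $2/3$ quoted in the abstract, which serves as a consistency check on the certificate.
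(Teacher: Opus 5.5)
Your reduction to outcome determinism on $\mathrm{supp}\,\mu_0$ is sound. The gap is that the proposal never reaches a contradiction, and the premise steering the unfinished step, that a single hexagon is consistent, is false.

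Your own constraints already settle it. A measure $q_\vn\le2q$ supported on $\{x_\vn=+1\}$ with $q_\vn+q_{-\vn}=2q$ must equal $2q\,\1_{\{x_\vn=+1\}}$ exactly. Substituting this into the triple equivalence $q_{\ve_1}+q_{\mathbf c}+q_{-\mathbf b}=3q$ of the hexagon \cref{eq:hexagon} gives
\[
2q\,N=3q,\qquad N=\1_{\{x_{\ve_1}=+1\}}+\1_{\{x_{\mathbf c}=+1\}}+\1_{\{x_{\mathbf b}=-1\}}\in\{0,1,2,3\}.
\]
Since $N$ is an integer and never $3/2$, this forces $q=0$, contradicting normalisation. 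This is Spekkens' hexagon argument, and it is the whole of the paper's proof. \cref{eq:hexagon} realises his configuration: three perfectly distinguishable antipodal pairs, each mixing to the centre, with both alternating triples also mixing to it. No preparation-noncontextual model reproduces that configuration, and the 24-cell contains it. No chain of interlocking hexagons, parity clash or Farkas certificate is needed.

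The intermediate claims are also wrong. The identity above gives every pattern zero mass, so ``each of the six patterns carries mass at least $1/6$'' cannot come from a correct computation. Even dropping the triple equivalence, your constraints and the statistics give $q(x_\vn=+1)=\tfrac12$ and $q(x_\vn=x_\vm=+1)=\tfrac14(1+\vn\cdot\vm)$. So the $q$-correlation of $x_\vn$ and $x_\vm$ is $\vn\cdot\vm$. For this hexagon that is $\tfrac12,\tfrac12,-\tfrac12$ on the pairs $(\ve_1,\mathbf b)$, $(\mathbf b,\mathbf c)$, $(\ve_1,\mathbf c)$. The expectation of $x_{\ve_1}x_{\mathbf b}+x_{\mathbf b}x_{\mathbf c}-x_{\ve_1}x_{\mathbf c}$ would then be $\tfrac32$, but this quantity is at most $1$ pointwise. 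For instance, uniform weight on the six patterns left by the support conditions gives $p(+\mid Q_{\mathbf b},\ve_1)=\tfrac23$ rather than $\tfrac34$. So one hexagon already fails in your framework in two independent ways.

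Finally, the proposed consistency check against the threshold $2/3$ would not work with this machinery. Your reduction to $\{\pm1\}^{12}$ rests on perfect distinguishability, which depolarisation destroys. The paper obtains $2/3$ from \cref{eq:sbound}, whose proof uses no outcome determinism.
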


\begin{proof}
A regular hexagon of pure states gives Spekkens' configuration~\cite{Spekkens2005}: alternate vertices $a,b,c$ and their antipodes $A,B,C$, each antipodal pair perfectly distinguishable and mixing to the maximally mixed state, to which the two alternating triples $\{a,b,c\}$ and $\{A,B,C\}$ also mix. He shows that no preparation-noncontextual model reproduces it, and the 24-cell contains sixteen such hexagons (\cref{prop:24cell}), one of which is \cref{eq:hexagon}.
\end{proof}

The hexagon also obeys a tight inequality: with $s$ the average probability of telling the antipodal pair $P_k,P_K$ apart with its own question $M_k$,
\begin{equation}
\label{eq:sbound}
s=\tfrac16\textstyle\sum_{k\in\{a,b,c\}}\big[p(+|M_k,P_k)+p(-|M_k,P_K)\big]\le\tfrac56
\end{equation}
in every model preparation-noncontextual for all of the hexagon's equivalences, the three-element ones included (see Appendix~\ref{app:proofs}). Mazurek \emph{et al.}~\cite{Mazurek2016} assume only the pairwise equivalences, adding measurement noncontextuality, which their Appendix B.2 shows is then needed; they test the resulting bound experimentally. Depolarising the hexagon, $\vx\mapsto\lambda\vx$, the ball gives $s=\tfrac12(1+\lambda)$, so the hexagon subtheory is contextual for $\lambda>\tfrac23$, and the threshold is tight: \cite{Mazurek2016}'s three-state model reproduces the depolarised hexagon at $\lambda=2/3$, satisfies the further equivalences too, and is noncontextual for preparations and measurements alike.

The 24-cell subtheory is therefore preparation contextual at least for $\lambda>2/3$; whether its further equivalences lower the threshold, and whether it is measurement contextual, we leave open, and both are linear programs~\cite{Schmid2021,Selby2024}. The threshold is not the hyperbit's: $s$ depends only on the hexagon's plane, so $2/3$ is common to every $n$-ball, corresponding to M\"uller and Garner's $\varepsilon<\tfrac16$ for the rebit in $\varepsilon$-embedding language~\cite{MullerGarner2023}. What changes at $n=4$ is where the hexagon sits: the qubit's octahedron of MUB eigenstates is noncontextual (\cref{prop:16cell}) and contains none, while the hyperbit's 24-cell contains sixteen.

\textit{The hyperbit inside two qubits}
The hyperbit cannot be fundamental, but can be represented using other systems. The unrestricted models that embed in finite-dimensional quantum theory are the Euclidean \emph{special} Jordan algebras~\cite{MullerGarner2023}, everything on the Jordan list but $H_3(\Oc)$, and the spin factor $V_n$ embeds in $\lfloor n/2\rfloor$ qubits by Tsirelson's construction \cite{Tsirelson1987,Kleinmann2013,BGW2020}. For $n=4$ that is two qubits: take the four anticommuting Pauli operators
\begin{equation}
\label{eq:eops}
e_1=X\otimes X,\quad e_2=Y\otimes X,\quad e_3=Z\otimes X,\quad e_4=\1\otimes Y,
\end{equation}
with $e_5=\1\otimes Z=-e_1e_2e_3e_4$ \cref{prop:completion}'s fifth question, and restrict the observables to $\mathcal O_4=\mathrm{span}_\R\{\1,e_1,\dots,e_4\}$, writing $\vn\cdot\ve=\sum_in_ie_i$.

\begin{proposition}[The standard embedding]
\label{prop:projection}
$\pi(\rho)=(\tr\rho e_1,\dots,\tr\rho e_4)$ maps the two-qubit states onto the closed unit $4$-ball; the question along a unit $\vn$ is the rank-two projective measurement $\{\tfrac12(\1\pm\vn\cdot\ve)\}$, with $\tr(\rho\,\tfrac12(\1\pm\vn\cdot\ve))=E^\pm_\vn(\pi(\rho))$; and conjugation by $\exp(\theta e_ie_j)$ and by $e_i$ implements $O(4)$.
\end{proposition}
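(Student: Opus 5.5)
The plan is to verify the algebraic relations of \cref{eq:eops} first, then obtain each of the three claims from the Clifford structure.

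\emph{Algebraic checks.} The $e_i$ are Hermitian involutions. For $i,j\le3$ the first tensor factors are distinct anticommuting Paulis and the second factors are both $X$, so $e_i$ and $e_j$ anticommute. Each $e_i$ with $i\le3$ anticommutes with $e_4=\1\otimes Y$ through the second factor. This makes $e_1,\dots,e_4$ a representation of $\Cl_4$ on $\C^4$, so \cref{lem:anticomm} applies. A short computation gives $e_1e_2e_3e_4=(XYZ\otimes X^3)(\1\otimes Y)=i\1\otimes XY$, and $i\cdot i\,Z=-Z$, which confirms $e_5=-e_1e_2e_3e_4$. For unit $\vn$ the anticommutation relations give $(\vn\cdot\ve)^2=\sum_i n_i^2\1=\1$. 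Each $e_i$ is a traceless Pauli string, so $\vn\cdot\ve$ is a traceless Hermitian involution, with eigenvalues $\pm1$ each of multiplicity two. The projectors $\tfrac12(\1\pm\vn\cdot\ve)$ are therefore rank two and form a projective measurement. Then $\tr\rho\,\tfrac12(\1\pm\vn\cdot\ve)=\tfrac12(1\pm\vn\cdot\pi(\rho))=E^\pm_\vn(\pi(\rho))$, which is the second claim.

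\emph{Image of $\pi$.} Containment in the ball: for unit $\vn$, $\vn\cdot\pi(\rho)=\tr\rho\,(\vn\cdot\ve)$ lies in $[-1,1]$ because $\vn\cdot\ve$ has spectrum $\{\pm1\}$. Taking $\vn=\pi(\rho)/|\pi(\rho)|$ then gives $|\pi(\rho)|\le1$. Surjectivity: for $|\vx|\le1$ set $\rho_\vx=\tfrac14(\1+\vx\cdot\ve)$. Its eigenvalues are $\tfrac14(1\pm|\vx|)\ge0$, since $\vx\cdot\ve=|\vx|\,\hat\vx\cdot\ve$. Orthogonality of Pauli strings, $\tr e_ie_j=4\delta_{ij}$, gives $\pi(\rho_\vx)=\vx$. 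An alternative is to send pure states $\ket\psi$ in the $+1$ eigenspace of $\vn\cdot\ve$ to $\vn$ and then use convexity and linearity of $\pi$, but the explicit $\rho_\vx$ is shorter.

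\emph{The $O(4)$ action.} Let $U_{ij}(\theta)=\exp(\theta e_ie_j)$ for $i\ne j$. Since $(e_ie_j)^\dagger=e_je_i=-e_ie_j$ and $(e_ie_j)^2=-\1$, this gives $U_{ij}=\cos\theta\,\1+\sin\theta\,e_ie_j$, which is unitary. Now $e_ie_j$ commutes with $e_k$ for $k\notin\{i,j\}$ and anticommutes with $e_i$ and $e_j$. Hence $U_{ij}\,e_k\,U_{ij}^\dagger$ fixes $e_k$ for $k\ne i,j$ and rotates the pair $(e_i,e_j)$ by angle $2\theta$. This is the standard spin-group computation $U e_i U^\dagger=e^{2\theta e_ie_j}e_i$. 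Through $\tr(U\rho U^\dagger e_k)=\tr(\rho\,U^\dagger e_kU)$, this induces on $\pi(\rho)$ a plane rotation in the $(i,j)$ plane. These rotations generate $SO(4)$. Conjugation by $e_i$ fixes $e_i$ and negates $e_k$ for $k\ne i$, so it acts as $-(\text{reflection in }\ve_i^\perp)$. In dimension four the map $-\1$ has determinant $+1$, so this transformation has determinant $-1$ and, together with $SO(4)$, generates all of $O(4)$. Because the conjugation preserves the span $\mathcal O_4$, the action is well defined on the restricted theory.

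The main obstacle is getting the signs and orientation right: the factor of two in the rotation angle, and the fact that conjugation by $e_i$ is a reflection composed with $-\1$ rather than a plain reflection. I would settle both with a single explicit commutation computation.
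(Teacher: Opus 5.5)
Your proof is correct and follows essentially the same route as the paper: the Clifford relations give $(\vn\cdot\ve)^2=\1$, tracelessness fixes the two rank-two eigenspaces, containment comes from taking $\vn=\pi(\rho)/|\pi(\rho)|$, and the $O(4)$ claim is the same spin-conjugation computation, with rotation by $2\theta$ and $e_i$-conjugation negating the other three generators. The only real difference is surjectivity, where you use the explicit preimage $\tfrac14(\1+\vx\cdot\ve)$ instead of the paper's pure states in the $+1$ eigenspace of $\vn\cdot\ve$ followed by mixing; both work, and the paper's version additionally shows that boundary points have pure preimages.
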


\begin{table}[t]
\caption{The two-qubit Pauli basis graded by the Clifford algebra of the four questions~\cite{WehnerWinter2008}, signs and factors of $i$ omitted from the Pauli labels, with the fermionic reading given below. The hyperbit sees grades $0$ and $1$; the even grades form the parity-preserving algebra, closed under products.}
\label{tab:grading}
\begin{ruledtabular}
\begin{tabular}{llll}
grade & operators & hyperbit & fermionic \\
\hline
$\Lambda^0$ & $II$ & unit & $\1$ \\
$\Lambda^1$ & $XX,YX,ZX,IY$ & questions $e_i$ & $\gamma_i$ \\
$\Lambda^2$ & $ZI,YI,XI,XZ,YZ,ZZ$ & generators $G_{ij}$ & $i\gamma_j\gamma_i$ \\
$\Lambda^3$ & $XY,YY,ZY,IX$ & $ie_5e_i$ & $i\gamma_i\gamma_j\gamma_k$ \\
$\Lambda^4$ & $IZ$ & fifth question $e_5$ & parity \\
\end{tabular}
\end{ruledtabular}
\end{table}

\Cref{tab:grading} grades the fifteen traceless two-qubit Pauli operators by $\Lambda^1\oplus\Lambda^2\oplus\Lambda^3\oplus\Lambda^4$ of $\R^4$, of dimensions $4+6+4+1$; degree two holds the six generators $G_{ij}=ie_je_i$. For the qubit, $\Lambda^2$ and $\Lambda^3$ of $\Cl_3$ are anti-Hermitian, so its observables are $\Lambda^0\oplus\Lambda^1$ and it sees everything: the Hodge star folds its degree two back into observables. There is no such identification at $n=4$.

The two-qubit realisation also underlies the operational claims made for hyperbits. Paw\l{}owski and Winter~\cite{PawlowskiWinter2012} showed that for tasks in which the receiver returns one bit, any strategy using one hyperbit can be simulated by shared entanglement and one classical bit, by Tsirelson's construction~\cite{Tsirelson1987,Tsirelson1980} with the hyperbit directions as Clifford generators. The other direction of their claimed equivalence fails, as they have since acknowledged \cite{PawlowskiWinterErratum}: entanglement-assisted classical bits are strictly stronger than hyperbits~\cite{Tavakoli2021}, the equivalence holding only if the receiver knows the hyperbit expectation value in advance~\cite{Scala2024}.

\textit{Kirkwood-Dirac distributions}
Kirkwood-Dirac (KD) distributions \cite{Kirkwood1933,Dirac1945,ArvidssonShukur2024} need products of projectors, which the hyperbit lacks but the two-qubit realisation possesses. For anticommuting involutions $A,B$ and any state $\rho$, since $P^B_bP^A_a=\tfrac14(\1+aA+bB+ab\,BA)$ and $BA=-iG$ with $G=iBA$,
\begin{equation}
\label{eq:kd}
Q_\rho(a,b)=\tr(P^B_bP^A_a\rho)=\tfrac14\big[1+a\av A+b\av B-i\,ab\,\av G\big],
\end{equation}
where $G$ is a Hermitian involution anticommuting with $A$ and $B$. Take a qubit with $A=\sigma_x$ and $B=\sigma_z$, so that $G=-\sigma_y$: the imaginary part of the KD distribution of two unbiased questions is the third, with a sign fixed by the orientation of the Bloch sphere, $G_{ij}=\epsilon_{ijk}e_k$. For the hyperbit $G_{ij}=ie_je_i$ are \cref{tab:grading}'s six degree-two operators, none of them a hyperbit observable.

\begin{theorem}
\label{thm:kdrange}
For a hyperbit state $\vx$ and $i\neq j$, among two-qubit states $\rho$ with $\pi(\rho)=\vx$ the KD imaginary part of the pair $(e_i,e_j)$ is $-\tfrac14ab\,g_{ij}$ with $g_{ij}=\tr(\rho G_{ij})$, and $g_{ij}$ ranges over the whole interval
\begin{equation}
\label{eq:kdrange}
|g_{ij}|\le\sqrt{1-x_i^2-x_j^2}.
\end{equation}
Unless $x_i^2+x_j^2=1$ it is therefore not a function of the hyperbit state.
\end{theorem}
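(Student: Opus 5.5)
The first claim is read off \cref{eq:kd}. Take $A=e_i$ and $B=e_j$, so that $G=ie_je_i=G_{ij}$. Then the imaginary part of $Q_\rho(a,b)$ is $-\tfrac14 ab\,\tr(\rho G_{ij})$, since $\av{e_i}$, $\av{e_j}$ and $\av{G_{ij}}$ are all real.

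\textbf{Upper bound.} The bound \cref{eq:kdrange} comes from the Brukner--Zeilinger inequality applied to an anticommuting triple. $G_{ij}$ is a Hermitian involution anticommuting with $e_i$ and $e_j$, so $\{e_i,e_j,G_{ij}\}$ are pairwise anticommuting involutions. For such a triple, $(\alpha e_i+\beta e_j+\gamma G_{ij})^2=(\alpha^2+\beta^2+\gamma^2)\1$, so the operator norm of $\vn\cdot(e_i,e_j,G_{ij})$ is $1$ for every unit $\vn$. Hence $|\alpha x_i+\beta x_j+\gamma g_{ij}|\le1$ for every unit $(\alpha,\beta,\gamma)$. Choosing $(\alpha,\beta,\gamma)$ parallel to $(x_i,x_j,g_{ij})$ gives $x_i^2+x_j^2+g_{ij}^2\le1$, which is the stated bound.

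\textbf{Attainment.} Every value in the interval must be reached by some $\rho$ with $\pi(\rho)=\vx$. It suffices to hit the two endpoints, because the fibre $\pi^{-1}(\vx)$ is convex (since $\pi$ is affine) and $g_{ij}$ is affine in $\rho$; mixing the endpoint states then covers the whole interval. By the $O(4)$ covariance of \cref{prop:projection}, where conjugation by $e_ke_l$ permutes questions and conjugation maps $G_{ij}$ to $\pm G_{i'j'}$, we may take $(i,j)=(1,2)$. The plan is to use a set of five pairwise anticommuting involutions containing $e_1,e_2,G_{12}$, so that one Bloch-ball bound controls all the relevant expectations. Let $k,l$ be the other two indices. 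The candidates $e_k$ and $e_l$ do not anticommute with $G_{12}$ (they commute with it), so they must be replaced. The operators $e_5e_k$ and $e_5e_l$ are anti-Hermitian, so instead I would use the degree-three elements $ie_5e_k$ and $ie_5e_l$. These anticommute with $G_{12}$, but they commute with $e_1,e_2$, so that choice fails too.

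The better route is constructive. Set $r=\sqrt{x_k^2+x_l^2}$ and let $\vm$ be the unit vector in the $(k,l)$ plane along $(x_k,x_l)$. The operator $f=\vm\cdot\ve$ is a question commuting with $G_{12}$. Consider the candidate
\[
\rho=\tfrac14\big(\1+x_1e_1+x_2e_2+r f+g\,G_{12}+c\,fG_{12}\big),
\]
where $fG_{12}$ is a Hermitian involution in degree three. The identities $\{e_1,e_2,G_{12}\}$ anticommuting, $f$ anticommuting with $e_1,e_2$, and $f$ commuting with $G_{12}$ organise these six operators into $\Cl_3\otimes\Cl_1$-type blocks. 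I would block-diagonalise $\rho$ using the commuting involutions $f$ and $G_{12}$'s partner $fG_{12}$, then compute its eigenvalues explicitly. Choosing $c=\pm$ appropriately (likely $c=gr/(\cdots)$ or a simple sign choice), the aim is to show $\rho\ge0$ exactly when $x_1^2+x_2^2+g^2\le1$ once $r^2\le1-x_1^2-x_2^2-g^2$ is respected. An alternative is the pure state $\rho=\ket\psi\bra\psi$ for $|\vx|=1$: there the fibre is a single point, and pure states realise any $g$ with $x_1^2+x_2^2+g^2+r^2=1$. For interior $\vx$, mix such states.

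I expect this positivity step to be the main obstacle. The issue is guaranteeing $\rho\ge0$ at the extreme $g=\pm\sqrt{1-x_i^2-x_j^2}$ while $x_k,x_l$ are still nonzero, since the naive six-term ansatz has no slack to absorb them. If the direct computation stalls, I would write $\vx=\tfrac12(\vx^+ +\vx^-)$ with $\vx^\pm$ on the unit sphere, sharing the components $x_i,x_j$ and having opposite $(k,l)$ parts of length $\sqrt{1-x_i^2-x_j^2}$. On the sphere, I would pick pure two-qubit states whose $G_{ij}$ expectation is extremal; these are eigenvectors of $x_ie_i+x_je_j+g\,G_{ij}$ with eigenvalue $1$. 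The remaining check is that such an eigenvector also has the required $(k,l)$ components, which I would verify by inspecting \cref{tab:grading}.

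Finally, the nonuniqueness claim follows at once. If $x_i^2+x_j^2<1$, the interval has positive length, so $g_{ij}$ is not determined by $\vx$. If $x_i^2+x_j^2=1$, the interval collapses to $g_{ij}=0$.
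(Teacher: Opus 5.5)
Your reading of the imaginary part from \cref{eq:kd}, the upper bound via the anticommuting triple $\{e_i,e_j,G_{ij}\}$, the reduction of attainment to the two endpoints by convexity of the fibre, and the closing non-uniqueness remark are all correct and agree with the paper. The gap is attainment itself. You never exhibit a state with $\pi(\rho)=\vx$ and $g_{ij}=\pm h$, where $h=\sqrt{1-x_i^2-x_j^2}$, and each route you sketch breaks.

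The six-term ansatz cannot reach the endpoints for any $c$. Realise $e_i,e_j,G_{ij}$ as $\sigma_x\otimes\1,\sigma_y\otimes\1,\sigma_z\otimes\1$ and $f=\sigma_z\otimes\sigma'_m$, so that $fG_{ij}=\1\otimes\sigma'_m$. In the block $\sigma'_m=s=\pm1$ your $\rho$ is $\tfrac14[(1+cs)\1+x_i\sigma_x+x_j\sigma_y+(g+rs)\sigma_z]$. At $g=h$, adding the two block positivity conditions requires $2\ge\sqrt{1+2hr+r^2}+\sqrt{1-2hr+r^2}$, which squares out to $r^2(1-h^2)\le0$. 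So the ansatz fails whenever $x_k^2+x_l^2>0$ and $x_i^2+x_j^2>0$.

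Your pure-state alternative rests on a false premise. Over a pure hyperbit state the fibre is not a point: it is the whole state space of the rank-two $+1$ eigenspace of $\vx\cdot\ve$ (\cref{prop:projection}). On that space $g_{ij}$ varies whenever $x_i^2+x_j^2<1$, which is exactly what the theorem asserts. Your fallback decomposition also fails as stated: points $\vx^\pm$ with opposite $(k,l)$ parts average to $(x_i,x_j,0,0)$, not to $\vx$. Inspecting \cref{tab:grading} does not replace the eigenvector computation you leave undone.

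The missing idea is a tensor factorisation adapted to the pair $(i,j)$, in which the endpoint states are simply product states. Any four anticommuting involutions on $\C^4$ are unitarily equivalent to $e_i=\sigma_x\otimes\1$, $e_j=\sigma_y\otimes\1$, $e_k=\sigma_z\otimes\sigma'_x$, $e_l=\sigma_z\otimes\sigma'_y$, which gives $G_{ij}=\sigma_z\otimes\1$. A product state with Bloch vectors $\mathbf a,\mathbf b$ then has $(x_i,x_j,g_{ij})=\mathbf a$ and $(x_k,x_l)=a_z(b_x,b_y)$.

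Take $a_z=\pm h$ and $\mathbf b=(x_k,x_l,0)/a_z$, with any $\mathbf b$ if $h=0$. This is a valid Bloch vector because $x_k^2+x_l^2\le h^2$. The two signs give both endpoints over the same $\vx$, and your mixing step then finishes the proof.

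In your notation, at $g=h$ this state is your ansatz with $c=r/h$ plus the two degree-two terms $(r/h)(x_ie_i+x_je_j)fG_{ij}$. Those terms supply exactly the slack you suspected was missing. Your eigenvector instinct leads to the same place: an endpoint state must live on the $+1$ eigenspace $\ket{\mathbf a}\otimes\C^2$ of $x_ie_i+x_je_j\pm hG_{ij}$, where $e_k,e_l$ compress to $a_z\sigma'_x,a_z\sigma'_y$.
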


\begin{remark}[Symmetry version]
\label{rem:equivariant}
At $\vx\neq0$ a covariant rule $g(R\vx)=R\,g(\vx)R^{\mathsf T}$ assigning a bivector to each state must take a value invariant under the stabiliser $SO(n-1)$ of $\vx$. For $n=4$, $\Lambda^2\R^4=(\vx\wedge\vx^\perp)\oplus\Lambda^2\vx^\perp\cong\R^3\oplus\R^3$ as $SO(3)$-representations, with no invariant vector, so $g\equiv0$. For $n=3$ the stabiliser is $SO(2)$ and $\Lambda^2\vx^\perp$ is an invariant line, the Hodge dual of $\vx$, which is the rule the qubit uses. The two-qubit substrate shows that every value in \cref{eq:kdrange} occurs.
\end{remark}

The real part of \cref{eq:kd}, $\tfrac14(1+ax_i+bx_j)$, is fixed by the hyperbit and can be negative, just as for a qubit. The imaginary part instead decouples from the state, because a hyperbit state carries no orientation: choosing the identification $\Lambda^2\R^3\cong\R^3$ is choosing the qubit's $i$, whereas $\Lambda^2\R^4$ is six-dimensional and admits none. A qubit's KD imaginary part is a statement about the third question; the hyperbit's is about a generator the theory cannot measure and the underlying two-qubit space is free to set. The imaginary part can be interpreted as a generator~\cite{DresselJordan2012}, here generating a rotation the hyperbit cannot see, so arguments using KD imaginarity or anomalous weak values as a witness of a property of the state (see~\cite{ArvidssonShukur2024} for a survey) just witness the underlying two-qubit space. Schmid \emph{et al.}~\cite{Schmid2024} make the companion point inside quantum theory, that KD negativity or imaginarity does not by itself imply contextuality.

\textit{Composites}
Ref.~\cite{MMPA2014,KrummMuller2019}'s no-go results assume the composite state space is reached from product states by its own reversible transformations, or that these act transitively on pure states. The two-qubit realisation suggests a composite that drops this, inheriting its entangled states from the ambient four qubits. Take $e_1,\dots,e_5$ on qubits $1,2$, and $f_1,\dots,f_5$ likewise on qubits $3,4$, writing $\vm\cdot\vf=\sum_jm_jf_j$, and restrict the observables to
\begin{equation}
\label{eq:compositespan}
\mathcal O_{4\otimes4}=\mathrm{span}_\R\{\1,\ e_i\otimes\1,\ \1\otimes f_j,\ e_i\otimes f_j\},
\end{equation}
with $1\le i,j\le4$, of dimension $25$. States are the images $(\vx,\vy,C)$ of four-qubit states, with $x_i=\av{e_i}$, $y_j=\av{f_j}$, $C_{ij}=\av{e_i\otimes f_j}$.

\begin{proposition}
\label{prop:compositeprops}
The composite \cref{eq:compositespan} is locally tomographic and no-signalling. Product states have $C=\vx\vy^{\mathsf T}$, and as for the Bloch correlation matrix of a bipartite quantum state \cite{deVicente2007} every separable state has $\|C\|_1\le1$, so $\|C\|_1>1$ witnesses entanglement. There is a unique state with $C=\1_4$ and $\vx=\vy=0$, the joint $+1$ eigenstate of the four commuting involutions $e_i\otimes f_i$; it has $\|C\|_1=4$, the maximum, and $\av{e_5\otimes f_5}=1$, and with $e_1,e_3$ against $(f_1\pm f_3)/\sqrt2$ its CHSH value is $2\sqrt2$, which no state exceeds.
\end{proposition}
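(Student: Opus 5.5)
Each claim of \cref{prop:compositeprops} can be checked separately, working throughout with the expectation values $(\vx,\vy,C)$ of the observables in \cref{eq:compositespan}.

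\emph{Local tomography and no-signalling.} A state is determined by its values on a spanning set of $\mathcal O_{4\otimes4}$. The products $E\otimes F$ of one-sided observables $E\in\mathcal O_4$, $F\in\mathcal O_4'$ span $\mathcal O_{4\otimes4}$, since $\{\1,e_i\}\otimes\{\1,f_j\}$ is exactly the listed basis. So the joint statistics of local questions fix the state, which is local tomography. For no-signalling, note that Alice's marginal on $e_i$ is $\tr(\rho\,e_i\otimes\1)$. This is unchanged by any operation on qubits $3,4$, because such operations preserve the reduced state of qubits $1,2$.

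\emph{Product states and separable states.} For $\rho=\rho_1\otimes\rho_2$ we have $C_{ij}=\tr(\rho_1e_i)\tr(\rho_2f_j)=x_iy_j$, so $C=\vx\vy^{\mathsf T}$. Here $|\vx|,|\vy|\le1$ by \cref{prop:projection}. A rank-one matrix then has $\|\vx\vy^{\mathsf T}\|_1=|\vx||\vy|\le1$. A separable state is a convex mixture of products, and $C$ is linear in $\rho$, so the triangle inequality gives $\|C\|_1\le1$ for every separable state.

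\emph{The state with $C=\1_4$.} The operators $e_i\otimes f_i$, $i=1,\dots,4$, pairwise commute, since two anticommutation signs cancel, and each is an involution. They generate an abelian group of Pauli-type operators. Its product $e_1e_2e_3e_4\otimes f_1f_2f_3f_4=e_5\otimes f_5$ by \cref{prop:completion}, since the two minus signs cancel. Together these give a stabiliser group of four independent commuting Pauli operators on four qubits, one generator per qubit. Independence follows because their Clifford grades are distinct. Hence the joint $+1$ eigenspace is one-dimensional, and the state is unique.

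On this state $\av{e_5\otimes f_5}=1$. The off-diagonal correlations vanish: for $i\neq j$, $e_i\otimes f_j$ anticommutes with $e_i\otimes f_i$, because $e_i$ commutes with itself while $f_j$ anticommutes with $f_i$. An operator anticommuting with a stabiliser has zero expectation, and the same argument applied to $e_i\otimes\1$ and $\1\otimes f_j$ gives $\vx=\vy=0$. Thus $C=\1_4$ and $\|C\|_1=4$. This is maximal because every entry of $C$ has modulus at most $1$, so every singular value of $C$ is at most $\|C\|_{\mathrm{op}}$. The bound $\|C\|_{\mathrm{op}}\le1$ holds because $\vm^{\mathsf T}C\vn=\av{(\vm\cdot\ve)\otimes(\vn\cdot\vf)}$ is the expectation of an involution. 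Since $C$ is $4\times4$, this gives $\|C\|_1\le4$.

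\emph{CHSH.} With $A_0=e_1$, $A_1=e_3$ and $B_\pm=(f_1\pm f_3)/\sqrt2$, the CHSH combination is $\vm^{\mathsf T}C\vn$ summed over these settings. For $C=\1$ it equals $2\sqrt2$. No state exceeds this: $A_0,A_1$ are $\pm1$-valued observables on one side and $B_\pm$ on the other, so Tsirelson's bound applies to the ambient four-qubit state.

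The main obstacle is the uniqueness step. It requires checking that the four stabilisers are independent and have no $-\1$ in their group, which I would confirm directly with the explicit Pauli forms from \cref{eq:eops}.
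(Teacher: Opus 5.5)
Your proposal is correct and takes essentially the paper's route. It gets local tomography from the product spanning set and $\|C\|_1\le1$ for separable states from convexity of the rank-one $\vx\vy^{\mathsf T}$. Uniqueness comes from the one-dimensional joint $+1$ eigenspace of the commuting involutions $e_i\otimes f_i$, with anticommutation against a stabiliser killing the marginals and off-diagonal entries, and CHSH from Tsirelson's bound. Your $\|C\|_1\le4$ via $\|C\|_{\mathrm{op}}\le1$ is an equivalent variant of the paper's $\tr(O^{\mathsf T}C)\le4$.

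One wording fix: all four stabilisers have the same grade, so their independence should instead be justified by noting that every nonempty product is $\pm e_S\otimes f_S$ with $e_S$ traceless, hence never $\pm\1$ (equivalently, $\tr\prod_i\tfrac12(\1+e_i\otimes f_i)=1$).
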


A Bell pair of qubits has $C=\mathrm{diag}(1,-1,1)$ and $\|C\|_1=3$: the hyperbit pair is more correlated in trace norm without exceeding Tsirelson's bound, which sees only a $2\times2$ block of $C$. Whether $C=\1_4$ self-tests four anticommuting involutions we leave open.

\begin{theorem}
\label{thm:localonly}
The Lie algebra of the four-qubit unitaries $U$ with $U\mathcal O_{4\otimes4}U^\dagger=\mathcal O_{4\otimes4}$ is $i\R\1\oplus\sog(4)\oplus\sog(4)$, of dimension $13$, spanned by $i\1$ and the local bivectors $e_ie_j\otimes\1$, $\1\otimes f_kf_l$. Every continuous reversible transformation of the composite that keeps the effective description closed is local.
\end{theorem}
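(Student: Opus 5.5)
The plan is to compute the normaliser Lie algebra directly. A one-parameter group $U_t=e^{-itH}$ preserves $\mathcal O_{4\otimes4}$ exactly when $i[H,\mathcal O_{4\otimes4}]\subseteq\mathcal O_{4\otimes4}$. Expand $H$ in the four-qubit Pauli basis, which by \cref{tab:grading} applied to each pair of qubits is the tensor product of two Clifford bases. Every Pauli is then, up to a phase, $e_A\otimes f_B$, with $A,B\subseteq\{1,\dots,5\}$ monomials in $e_1,\dots,e_5$ and $f_1,\dots,f_5$ modulo the relations $e_5=-e_1e_2e_3e_4$ and $f_5=-f_1f_2f_3f_4$. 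Equivalently, every Pauli is a monomial $e_A\otimes f_B$ with $A,B\subseteq\{1,2,3,4\}$, graded by $(|A|,|B|)\in\{0,\dots,4\}^2$.

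The observable space is spanned by the monomials of bidegree $(0,0),(1,0),(0,1),(1,1)$. Since distinct Pauli monomials are linearly independent and a commutator of two Pauli monomials is either zero or twice a single monomial, the condition decouples: $H$ normalises iff each Pauli component $P=e_A\otimes f_B$ of $H$ (with nonzero coefficient) sends every generator of $\mathcal O_{4\otimes4}$ either to a commuting partner or to a monomial of bidegree at most $(1,1)$. The linear decoupling needs care, because distinct components could in principle cancel in $[H,O]$. They cannot: for a fixed $O$, the map $P\mapsto PO$ is injective on monomials up to phase, so distinct $P$ give distinct products.

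With this reduction, the remaining step is a case analysis on $(A,B)$. First test against $e_i\otimes\1$. The product $e_Ae_i$ has $A$-degree $|A|\pm1$ (symmetric difference), and $e_i$ anticommutes with $e_A$ iff $|A|$ is even and $i\notin A$, or $|A|$ is odd and $i\in A$. The resulting commutator is $e_{A\triangle\{i\}}\otimes f_B$, which must have $|A\triangle\{i\}|\le1$ and $|B|\le1$ whenever it is nonzero.

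Next run through the bidegrees. For $|B|\ge2$, choosing $i$ so that $e_i$ anticommutes with $e_A$ is always possible unless $A=\emptyset$ (when $|A|$ is even, pick $i\notin A$; this fails only for $|A|=4$, where $e_A\propto e_5$ commutes with all $e_i$, but then testing against $e_i\otimes f_j$ handles it). So the only survivors with $|B|\ge2$ have $A=\emptyset$, and the symmetric test with $\1\otimes f_j$ and $e_i\otimes f_j$ forces $|B|=2$. This yields the local bivectors $\1\otimes f_kf_l$, which map $f_j\mapsto f_{B\triangle\{j\}}$ of degree $1$; these are genuinely allowed. The remaining cases, such as $|A|=4$ or $B$ of degree $3$, must be killed by testing against the mixed terms $e_i\otimes f_j$. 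The mixed cases $|A|,|B|\ge1$ (for instance $e_i\otimes f_j$ itself, or $e_5\otimes f_5$) are excluded by exhibiting one generator giving bidegree $(2,1)$ or $(1,2)$, or $(0,2)$ resp.\ $(2,0)$. The identity survives trivially and gives $i\R\1$. Counting $1+6+6=13$ and checking that the local bivectors close into $\sog(4)\oplus\sog(4)$ by \cref{prop:projection} finishes the first claim. The second claim follows because a continuous reversible transformation keeping the description closed is generated by this algebra, hence lies in $U(1)\times\Spin(4)\times\Spin(4)$, and acts on $(\vx,\vy,C)$ as $(R\vx,S\vy,RCS^{\mathsf T})$.

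The main obstacle I expect is the exhaustive bidegree bookkeeping, especially the degree-4 and degree-3 pieces, where $e_5$ commutes with all $e_i$. For these I would first try the mixed test $e_i\otimes f_j$, with $i$ chosen so that $e_i$ commutes with $e_A$ and $j$ so that $f_j$ anticommutes with $f_B$, or vice versa. For Paulis, $[P\otimes Q,R\otimes S]\neq0$ iff exactly one of the pairs $(P,R)$, $(Q,S)$ anticommutes. If the case analysis becomes unwieldy, the fallback is a finite enumeration over the $255$ nontrivial Paulis.
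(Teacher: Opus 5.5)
Your reduction is the paper's own: the Pauli-basis decoupling is exactly \cref{lem:pauli}, and the paper then simply reports which of the $256$ Paulis pass the test, so your fallback enumeration is its proof. The problem is the hand case analysis you offer instead, which rests on a reversed parity rule. Since $e_ie_A=(-1)^{|A|}e_Ae_i$ for $i\notin A$ and $e_ie_A=(-1)^{|A|-1}e_Ae_i$ for $i\in A$, $e_i$ anticommutes with $e_A$ iff $|A|$ is odd and $i\notin A$, or $|A|$ is even and $i\in A$. Your version is the condition for commuting; it would make $\1=e_\emptyset$ anticommute with every $e_i$. In particular $e_A\propto e_5$ for $|A|=4$ anticommutes with every $e_i$, as \cref{prop:completion} requires at $n=4$ (that is what makes it a fifth question); it does not commute with them. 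Applied literally, your rule would let $\1\otimes f_k$ pass the test against $\1\otimes f_j$ and make $\1\otimes f_kf_l$ fail it, the opposite of the theorem. Your stated conclusions are right only because you used the correct behaviour there without saying so.

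With the correct rule the analysis closes at once and never needs the mixed generators. Only scalars commute with all of $e_1,\dots,e_4$, since they act irreducibly on $\C^4$. So for every nonempty $A$ some $e_i$ anticommutes with $e_A$: take $i\in A$ if $|A|$ is even and $i\notin A$ if it is odd. Testing $e_A\otimes f_B$ against that $e_i\otimes\1$ forces $|B|\le1$, and symmetrically $B\neq\emptyset$ forces $|A|\le1$. If both are nonempty, $e_k\otimes f_l$ tested against $e_i\otimes\1$ with $i\neq k$ gives bidegree $(2,1)$. If $B=\emptyset$, the anticommuting $e_i$ send $e_A$ to degree $2$, $4$, $3$ for $|A|=1,3,4$, and to degree $1$ only for $|A|=2$. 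The survivors are $\1$ and the twelve local bivectors, and these pass every test: against $e_k\otimes f_l$ the bivector $e_ie_j\otimes\1$ either commutes or gives bidegree $(1,1)$. The degree-three and degree-four pieces you flagged as the main obstacle are therefore killed by the local tests alone. For example, $e_5\otimes f_5$ commutes with every $e_i\otimes f_j$, but $e_i\otimes\1$ sends it to bidegree $(3,4)$, a case your list $(2,1),(1,2),(0,2),(2,0)$ does not cover.
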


The proof reduces to a combinatorial test on the $256$ four-qubit Pauli operators (\cref{lem:pauli}).

This obstruction comes from closure: an interaction keeping the effective description closed needs a local observable space closed under the operator product, and $\mathrm{span}\{\1,e_1,\dots,e_4\}$ never is, since it would have to contain $e_1e_2$. The one space in this family with an entangling generator is the even subalgebra $\Lambda^0\oplus\Lambda^2\oplus\Lambda^4\cong M_2(\C)\oplus M_2(\C)$, closed under products and not a hyperbit at all but a qubit with a classical bit. The same test disposes of realisations with multiplicity and of enlargements of the local observable space, and a product generator is ruled out by hand (\cref{app:local}).

The composite \cref{eq:compositespan} thus has entanglement without interaction, as in boxworld, where the reversible dynamics is generated by local operations and permutations alone~\cite{Barrett2007,GMCD2010}.

Letting the composite's observables exceed the span of the local ones does not rescue the hyperbit either, and the reason is again closure. Barnum, Graydon and Wilce's canonical tensor product $A\odot B$, the Jordan algebra generated by the products $a\otimes b$, is a composite of any two embedded Euclidean Jordan systems, with no reversibility asked of the factors \cite{BGW2020}. Take two copies of $V_4$ in the standard embedding. The one-sided observables of $V_4\odot V_4$ are $V_5$ on each side, not $V_4$, and $V_4\odot V_4=V_5\odot V_5$ as subspaces of $M_{16}(\C)$: the composite is the quaternionic bit's, $M_{16}(\R)_{\rm sa}$ \cite{BGW2020}, of dimension $136$ against the $25$ of \cref{eq:compositespan}, so neither locally tomographic nor local in its dynamics. A partner system hands the fifth question back, and it does so where (B) fails: $V_5$ is the smallest reversible Jordan algebra containing $V_4$, and that is the algebra the partner sees.

\textit{The hyperbit as the odd sector of two fermionic modes---}
Four pairwise anticommuting involutions on $\C^4$ have a standard name: the Majorana operators $\gamma_1,\dots,\gamma_4$ of two fermionic modes, the factorisation in the proof of \cref{thm:kdrange} being their Jordan--Wigner representation~\cite{BravyiKitaev2002}. Every structure above then has a fermionic name, in the last column of~\cref{tab:grading}: the fifth question is parity up to sign, the state $\vx$ the one-point functions $\av{\gamma_i}$, the generators $G_{ij}$ the covariance $i\av{\gamma_j\gamma_i}$, and $SO(4)$ the Gaussian unitaries.

The degree-two operators are the Pauli operators of a Majorana qubit \cite{Nayak2008,BravyiTerhal2010}, with $X,Y,Z$ proportional to $G_{23},G_{31},G_{12}$; the two parity sectors carry the two $\mathfrak{su}(2)$ summands of $\sog(4)$, self-dual bivectors vanishing on one and anti-self-dual on the other. They span the even subalgebra given in our composite discussion, the physical observables under the superselection rule.

The hyperbit's observables are instead the degree-one operators, the single Majorana operators, which superselection forbids~\cite{Kitaev2001,Szalay2021}: the hyperbit is the sector superselection removes, and a Majorana qubit is its complement. For every state commuting with parity the one-point functions vanish, since $\gamma_i$ anticommutes with the parity $e_5$, so every physical state of two modes is the maximally mixed hyperbit state and the hyperbit's pure states are parity-violating superpositions. What a physical state does have, \cref{thm:kdrange}'s covariance matrix $g_{ij}=i\av{\gamma_j\gamma_i}$ together with the parity $\av{e_5}$, is what the hyperbit cannot see.

We now have a physical reason for three of the four missing properties given above. Observables do not generate dynamics: fermionic Hamiltonians are even, sums of bilinears, never linear in a Majorana operator. There is no reversible closure: odd operators do not close under products, their associative closure being the whole algebra and their reversible Jordan closure $V_5$. There is no interacting composite: couplings between fermionic systems are even overall, the hopping terms $i\gamma_i\gamma'_j$, and fermionic systems compose by the graded tensor product, which fails local tomography~\cite{DAriano2014a,DAriano2014b}. This places the hyperbit in the sector the graded composite excludes: the hopping generator $e_i\otimes f_j$ lies in the composite's span, \cref{eq:compositespan}, but conjugating $e_k\otimes\1$ by it gives $e_ke_i\otimes f_j$, three Majorana operators, odd overall and outside it. The graded version of \cref{eq:compositespan}, with the second system's questions written as $e_5\otimes f_j$ so that they anticommute with the first system's, has the same $13$ local generators.

This answers the title question in physical terms. Nature does contain two-level systems whose algebra carries four unbiased questions, pairs of fermionic modes; a superselection rule removes the odd sector, degrees one and three, leaving the even one with its three questions per parity block. A four-question two-level system is a fermion pair with that superselection inverted. It could not arise as an effective description either: M\"uller and Garner~\cite{MullerGarner2023} show that the only quantum-embeddable models a physical decoherence map produces are classical theory and complex quantum theory with superselection rules.

\textit{Discussion}
The algebraic, geometric and physical answers agree, and what the hyperbit lacks reduces in each of them to closure: the Jordan algebra does not close under products, the composite closes only by handing back the fifth question, and in fermionic terms both are parity superselection.

Degree two of the two-qubit grading originates the generators, the KD imaginary parts, and the fermionic covariance matrix all at once. The qubit's Hodge star folds it into degree one, which is why the qubit's KD imaginary part is the third question and its Hamiltonians are observables; nothing folds back for the hyperbit, and the substrate sets that imaginary part to either sign. Any foundational argument resting on the sign or size of KD imaginarity, or so on imaginary weak values, is therefore using the qubit's complex structure rather than the geometry of complementary questions. Whether the 24-cell's contextuality is a resource for the two-qubit states and measurements that carry it we leave open.

\textit{Acknowledgments:}
The author acknowledges support from a Royal Society Research Grant (RG/R1/251590), an EPSRC Mathematical Sciences Small Grant (UKRI3647), and from their EPSRC Quantum Technologies Career Acceleration Fellowship (UKRI1217).

\textit{Data and code availability:} A Wolfram Mathematica notebook reproducing all numerical results is available from the author on request.

\bibliographystyle{apsrev4-2}
\bibliography{ref}

\appendix

\section{Quasiprobability representations}
\label{app:wigner}

Frames give quasiprobability representations~\cite{FerrieEmerson2008,Spekkens2008}. Let $\vn_\lambda$ be $N$ unit vectors spanning $\R^n$, the index $\lambda$ running over a set closed under $\lambda\mapsto-\lambda$ with $\vn_{-\lambda}=-\vn_\lambda$, with $\sum_\lambda\vn_\lambda=0$ and $\sum_\lambda\vn_\lambda\vn_\lambda^{\mathsf T}=(N/n)\1$, as the 16-cell, tesseract and 24-cell all are. The self-dual representation is $W_\lambda(\vx)=(1+\sqrt n\,\vn_\lambda\cdot\vx)/N$ for states and $\varepsilon_\lambda(E)=a+\sqrt n\,\vb\cdot\vn_\lambda$ for effects, with $\sum_\lambda W_\lambda\varepsilon_\lambda(E)=E(\omega)$ and $\varepsilon_\lambda(E^+_\vm)=\tfrac N2W_\lambda(\vm)$, so states and sharp effects share one function and the factor $\sqrt n$ is forced. For $n=3$ and the tetrahedron it is Wootters' discrete Wigner function~\cite{Wootters1987,GHW2004}, with minimum $(1-\sqrt3)/4$. The frame of $2^{n-1}$ even-parity strings $\vs/\sqrt n$, which at $n=3$ is that tetrahedron, satisfies the same identities at every $n\ge3$; for $n=4$ those strings are frame $\mathcal B$, so the Wootters-type and the MUB-frame constructions coincide up to triality, whereas for $n=3$ they give different functions on four and six points.

\begin{theorem}
\label{thm:wigner}
On the axis frame $\vn_{\pm i}=\pm\ve_i$, $W_{\pm i}(\vx)=(1\pm\sqrt n\,x_i)/2n$ is nonnegative if and only if $|x_i|\le1/\sqrt n$ for all $i$, and the pure states with nonnegative $W$ are the $2^n$ Hadamard states $\vs/\sqrt n$ and no others. For $n=3$ these are the cube, the two SIC tetrahedra. For $n=4$ they are the tesseract $\mathcal B\cup\mathcal C$, so the Wigner function $W^{(\mathcal F)}$ of any frame $\mathcal F\in\{\mathcal A,\mathcal B,\mathcal C\}$ is nonnegative on the other two frames and their convex hull, and nowhere else. On its own frame $W^{(\mathcal F)}_\lambda(\vn_\mu)$ takes the values $\tfrac38,-\tfrac18,\tfrac18$ ($\lambda=\mu$, $\lambda=-\mu$, the other six); on a state of another frame it is $\tfrac14$ on four points and $0$ on four.
\end{theorem}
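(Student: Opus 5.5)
The statement to prove is \cref{thm:wigner}; everything in it reduces to elementary inequalities on the ball, and I would take its claims in order.

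\emph{Nonnegativity region.} On the axis frame $W_{\pm i}(\vx)=(1\pm\sqrt n\,x_i)/2n$. Both signs are nonnegative for a given $i$ if and only if $|x_i|\le1/\sqrt n$, so $W\ge0$ exactly on the cube $\max_i|x_i|\le1/\sqrt n$ intersected with the ball.

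\emph{Pure states.} A pure state has $\sum_ix_i^2=1$. If moreover every $x_i^2\le1/n$, then $1=\sum_i x_i^2\le n\cdot(1/n)=1$, so equality holds in each term and $x_i^2=1/n$ for every $i$. Hence $\vx=\vs/\sqrt n$ for some sign string $\vs\in\{\pm1\}^n$. Conversely, each such vector is a unit vector satisfying the bound. This gives exactly $2^n$ Hadamard states.

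\emph{The case $n=3$.} The strings $\vs/\sqrt3$ are the eight cube vertices. Split by parity, they form the two tetrahedra, which are the SIC tetrahedra.

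\emph{The case $n=4$.} The strings $\vs/2$ are by definition $\mathcal B\cup\mathcal C$, so $W^{(\mathcal A)}$ is nonnegative on $\mathcal B\cup\mathcal C$. The nonnegativity region is convex, being an intersection of half-spaces, so it contains the convex hull of these points. For ``nowhere else'' I read the claim as being about pure states: the only pure states with $W^{(\mathcal A)}\ge0$ are those in $\mathcal B\cup\mathcal C$, by the previous step.

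To pass from $\mathcal A$ to $\mathcal B$ and $\mathcal C$, I would invoke \cref{prop:24cell}. An element $g$ of $W(F_4)$ permuting the frames is orthogonal, so it satisfies $W^{(g\mathcal F)}_{g\lambda}(g\vx)=W^{(\mathcal F)}_\lambda(\vx)$. The full-$S_3$ statement then transports the $\mathcal A$ result to each of the other frames.

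Alternatively, one can argue directly. Each frame is orthonormal, so in the orthonormal basis of frame $\mathcal F$ the same coordinate argument applies. The pure states with $W^{(\mathcal F)}\ge0$ are then the unit vectors with all four frame coordinates equal to $\pm\tfrac12$. By \cref{prop:24cell}, the 16 points of the other two frames have dot product $\pm\tfrac12$ with every vector of $\mathcal F$, so they are such vectors. A count shows they are all of them: there are $2^4=16$ sign strings, matching $8+8$ points.

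\emph{Explicit values.} Here $n=4$ and $N=8$, so $W_\lambda(\vx)=(1+2\vn_\lambda\cdot\vx)/8$.
\begin{itemize}
\item On its own frame, the dot products $\vn_\lambda\cdot\vn_\mu$ are $1$, $-1$ and $0$. These give $\tfrac38$, $-\tfrac18$, and $\tfrac18$ on the six remaining points.
\item On a state of another frame, every dot product is $\pm\tfrac12$, giving $\tfrac14$ or $0$. Because $\vn_{-\lambda}=-\vn_\lambda$, the values come in pairs $(\tfrac14,0)$, so there are four of each.
\end{itemize}

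\emph{Main obstacle.} The only step needing care is interpreting ``nowhere else'' precisely, that is, restricting it to pure states. With that reading fixed, the remaining steps are routine equality cases.
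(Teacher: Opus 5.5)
Your argument is the paper's: the equality case of $\sum_i x_i^2=1$ under $x_i^2\le1/n$, followed by direct evaluation. Your transport between frames (orthogonal elements of $W(F_4)$, or orthonormal frame coordinates) and your explicit values $\tfrac38,-\tfrac18,\tfrac18$ and $\tfrac14,0$ are also correct.

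The one shortfall is the step you flag as the main obstacle. You weaken ``nowhere else'' to a statement about pure states only. The intended claim is stronger: the main text says the representation is ``nonnegative exactly on the other two frames and their hull.'' That claim is true and takes one more line, so it need not be read away.

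The cube $\{\max_i|x_i|\le1/\sqrt n\}$ is the convex hull of its vertices $\vs/\sqrt n$. These vertices are unit vectors, so by convexity of the ball the whole cube lies inside it. The region you computed, cube $\cap$ ball, is therefore the full cube, which is exactly the convex hull of the Hadamard states. For $n=4$ this gives $W^{(\mathcal A)}\ge0$ precisely on $\mathrm{conv}(\mathcal B\cup\mathcal C)$, with mixed states included. Your orthogonal transport then carries this statement to $\mathcal B$ and $\mathcal C$.
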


\begin{proof}
A pure state has $\sum_ix_i^2=1$, which with $|x_i|\le1/\sqrt n$ forces $|x_i|=1/\sqrt n$. The rest is evaluation.
\end{proof}

The $n=3$ case is the symmetric member of Wallman and Bartlett's family of qubit representations with four nonnegative bases~\cite{WallmanBartlett2012}, whose Bloch vectors must be the vertices of a right cuboid. The Hadamard states are also the optimal encodings of an $(n,1)$ random access code with one $n$-ball \cite{Ambainis2002,PawlowskiWinter2012}: any of $n$ bits is recovered with probability $\tfrac12(1+1/\sqrt n)$, which is $\tfrac34$ for the hyperbit, and $\sum_i\av{e_i}^2\le1$ makes this the most any encoding achieves.

\section{Why the ball}
\label{app:ball}

The ball is not the only two-level state space with $n$ unbiased questions. Describe a state by the biases $\vx=(\av{e_1},\dots,\av{e_n})$ of $n$ questions, and take as candidates the convex bodies symmetric under the hyperoctahedral group $B_n$ (signed permutations of the axes) whose only states sharp for the $i$th question are the axis points $\pm\ve_i$. The cross-polytope $\sum_i|x_i|\le1$ qualifies, with only the $2n$ axis points as pure states: one question sharp at a time and nothing else. The hypercube $[-1,1]^n$, for $n=2$ the square bit of boxworld~\cite{Barrett2007}, does not: a whole facet is sharp for a given question, and its vertices answer all $n$ questions at once.

\begin{proposition}
\label{prop:whyball}
Let $K$ be such a body in $\R^n$, $n\ge3$, with the axis points extreme and the linear maps preserving $K$ as its reversible transformations. If some reversible transformation connected to the identity moves a pure state, $K$ is the unit ball.
\end{proposition}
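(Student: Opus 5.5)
The plan is to show that the reversible group $G$ of $K$ is forced to contain all of $SO(n)$, and then to read off $K$ from the orbit of an axis point. First, since $K$ contains the $2n$ points $\pm\ve_i$ and is convex and centrally symmetric, it has nonempty interior containing $0$. Hence the group $G$ of linear maps preserving $K$ is a closed bounded subgroup of $GL(n,\R)$, and so a compact Lie group. By hypothesis it contains the hyperoctahedral group $B_n$. Averaging the standard inner product over $G$ gives a $G$-invariant inner product. A $B_n$-invariant symmetric bilinear form is diagonal, because the sign flips kill off-diagonal terms, and has equal diagonal entries, because the permutations mix them. It is therefore a multiple of the standard one, so $G\subseteq O(n)$ in the standard Euclidean structure.

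Next, let $G_0$ be the identity component and $\mathfrak g_0\subseteq\sog(n)=\Lambda^2\R^n$ its Lie algebra. The hypothesis that some transformation connected to the identity moves a pure state gives $G_0\neq\{\1\}$, hence $\mathfrak g_0\neq0$. Since $G_0$ is normal in $G\supseteq B_n$, the subspace $\mathfrak g_0$ is invariant under conjugation by $B_n$. I would then show that $\Lambda^2\R^n$ is an irreducible $B_n$-representation for $n\ge3$. The diagonal sign subgroup acts on $\ve_i\wedge\ve_j$ by the character $s_is_j$, and distinct pairs $\{i,j\}$ give distinct characters. So any invariant subspace is spanned by a subset of the $\ve_i\wedge\ve_j$. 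The permutation subgroup acts transitively on pairs, so a nonzero invariant subspace is everything. Thus $\mathfrak g_0=\sog(n)$ and $G_0=SO(n)$.

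Finally I identify $K$. The axis point $\ve_1$ lies in $K$ and has unit length, and $SO(n)$ acts transitively on $S^{n-1}$ for $n\ge2$. So the whole unit sphere lies in $K$, and by convexity the unit ball is contained in $K$. For the reverse inclusion I use the sharpness hypothesis. The only states sharp for the first question are $\pm\ve_1$, so the support function satisfies $h_K(\ve_1)=1$, since the maximum of $x_1$ over $K$ is attained exactly at $\ve_1$. Because $G\subseteq O(n)$ preserves $K$ and $SO(n)$ is transitive on unit directions, $h_K(\vn)=1$ for every unit $\vn$, which gives $K\subseteq\{|\vx|\le1\}$.

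The step I expect to need the most care is the first one. One must make sure $G$ is genuinely compact, which requires that $K$ be a bounded body with interior; this follows from the $B_n$-symmetry and the hypothesis that the $\pm\ve_i$ are extreme points. One must also make sure that the invariant inner product is the standard one rather than some rescaled ellipsoidal one. The irreducibility argument for $\Lambda^2$ is where $n\ge3$ enters only mildly. The genuine role of $n\ge3$, versus $n=2$ where a square's rotations are discrete, is that a nontrivial connected subgroup must already be all of $SO(n)$.
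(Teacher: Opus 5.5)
Your proof is correct and follows the paper's argument step for step (compact symmetry group, $B_n$ forcing the standard inner product so $G\subseteq O(n)$, irreducibility of $\Lambda^2\R^n$ under $B_n$ via the characters $s_is_j$ and transitivity on pairs, then the orbit of $\ve_1$); the only difference is the final inclusion, where you use the support function and the bound $x_1\le1$ on biases, while the paper uses extremality of the sphere points. Your closing remark on the role of $n\ge3$ is mistaken, though it does not affect the proof: $\Lambda^2\R^2$ is one-dimensional and any nontrivial connected subgroup of $SO(2)$ is $SO(2)$, so the same argument also works at $n=2$.
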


\begin{proof}
The maps preserving $K$ form a compact group containing $B_n$, so they preserve an inner product, the standard one because $B_n$ acts irreducibly on $\R^n$, and lie in $O(n)$. The Lie algebra of the identity component is then a $B_n$-invariant subspace of $\sog(n)\cong\Lambda^2\R^n$, which is irreducible for $n\ge3$: the sign changes act on $\ve_i\wedge\ve_j$ by the characters $s_is_j$, distinct for distinct pairs, so an invariant subspace is spanned by a subset of the $\ve_i\wedge\ve_j$, and the permutations act transitively on pairs. The component is therefore trivial or $SO(n)$. In the second case the orbit of $\ve_1$ is the unit sphere, so every point of the sphere is extreme in $K$, and no point outside the ball lies in $K$, since its orbit would put the sphere in the interior.
\end{proof}

The cross-polytope's pure states are permuted only by the finite group $B_n$, so \cref{prop:whyball} leaves only the ball.

\section{Proofs}
\label{app:proofs}

\begin{proposition}
\label{prop:16cell}
The cross-polytope subtheory of the $n$-ball, with states $\pm\ve_i$, the $n$ axis questions and their mixtures, has a preparation- and measurement-noncontextual model: ontic space $\{\pm1\}^n$, the $i$th question reading the $i$th sign, and the state $s\ve_i$ uniform on the $2^{n-1}$ ontic states with $i$th sign $s$. It reproduces $p(\pm|\ve_j;s\ve_i)=\tfrac12(1\pm s\delta_{ij})$ and the subtheory's operational equivalences.
\end{proposition}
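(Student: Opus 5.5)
The plan is to verify directly that the stated ontological model reproduces the statistics, and then to check that every operational equivalence of the subtheory is represented by equal ontic distributions. Preparation noncontextuality then holds automatically, because each preparation gets a single distribution that depends only on its equivalence class. Measurement noncontextuality follows once response functions are shown to depend only on the effect.

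\textbf{Statistics.} Let $\mu_{s\ve_i}$ be uniform on the $2^{n-1}$ strings $\lambda\in\{\pm1\}^n$ with $\lambda_i=s$, and let the $j$th question return $\lambda_j$ deterministically.
\begin{itemize}
\item For $j=i$ the outcome is $s$ with certainty, giving $\tfrac12(1\pm s)$.
\item For $j\neq i$, the coordinate $\lambda_j$ is uniform under $\mu_{s\ve_i}$, giving $\tfrac12$.
\end{itemize}
This is exactly $\tfrac12(1\pm s\delta_{ij})$. Mixtures of preparations are assigned the corresponding mixtures of the $\mu$'s. Mixtures of measurements, and coarse-grainings, are assigned the corresponding mixtures of the deterministic response functions. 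By linearity this reproduces all of the subtheory's statistics.

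\textbf{Preparation equivalences.} A mixture $\sum_{i,s}p_{i,s}\,s\ve_i$ has operational content fixed by the Bloch vector $\vx$, with $x_i=p_{i,+}-p_{i,-}$, since only the axis questions are available. I will show its ontic distribution $\mu=\sum p_{i,s}\mu_{s\ve_i}$ is a function of $\vx$ alone. Each $\mu_{s\ve_i}$ is the product measure on $\{\pm1\}^n$ whose $i$th coordinate has mean $s$ and whose other coordinates are uniform. Its Fourier--Walsh expansion on $\{\pm1\}^n$ is therefore
\begin{equation}
\mu_{s\ve_i}(\lambda)=2^{-n}\big(1+s\lambda_i\big).
\end{equation}
Averaging gives
\begin{equation}
\mu(\lambda)=2^{-n}\big(1+\vx\cdot\lambda\big).
\end{equation}
This depends only on $\vx$, so any two decompositions of the same state (for instance $\tfrac12(\ve_i+(-\ve_i))=\tfrac12(\ve_j+(-\ve_j))$) get the same ontic distribution.

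This step is the main point to get right. The key observation is that the model has no higher Walsh components, so mixtures see only the linear term.

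\textbf{Measurement equivalences.} The effects in the subtheory are convex combinations and coarse-grainings of the projectors $E^\pm_{\ve_i}$. An effect is determined operationally by the affine functional $a+\vb\cdot\vx$. Its response function is $\xi(\lambda)=a+\vb\cdot\lambda$, again built only from the linear terms $\lambda_i$.

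Hence operationally equal effects have equal response functions. Two examples:
\begin{itemize}
\item The equivalence $\tfrac12E^+_{\ve_i}+\tfrac12E^-_{\ve_i}=\tfrac12\1$ is represented by $\tfrac12$.
\item The equivalence across different $i$ is handled the same way.
\end{itemize}

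One subtlety remains: the response functions must lie in $[0,1]$. They do, since each is a convex mixture of indicator functions.

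Finally, the equivalences of the subtheory are by definition those expressible using its own states and measurements, so checking the linear parametrisations above covers all of them.
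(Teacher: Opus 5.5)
Your proof is correct, and it is the direct verification the paper intends. The paper gives no separate proof: it states the model and asserts that it reproduces the statistics and the operational equivalences. Your Walsh-expansion observation makes the equivalence check explicit: every mixture is represented by $2^{-n}(1+\vx\cdot\lambda)$ and every subtheory effect by $a+\vb\cdot\lambda$. That settles all equivalences once you note that the axis questions read off $\vx$ and that the states $\pm\ve_i$ affinely span $\R^n$.
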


\begin{proof}[Proof of \cref{prop:count}]
Two anticommuting invertible operators satisfy $A=-BAB^{-1}$, so $A$ is traceless and $d$ is even unless $n\le1$. A representation of $\Cl_n$ on $\C^d$ is a direct sum of irreducibles, of dimension $2^{\lfloor n/2\rfloor}$, so $2^{\lfloor n/2\rfloor}$ divides $d$ and $n\le2q+1$; tensoring an irreducible representation with $\1_m$ attains the bound. The bound is Shapiro's \cite[Ch.~1, Ex.~12]{Shapiro2000}; see \cite{Hrubes2016} for a recent account.
\end{proof}

\begin{proof}[Proof of \cref{prop:projection}]
Since $(\vn\cdot\ve)^2=|\vn|^2\1$ and each $e_i$ is traceless, $\vn\cdot\ve$ has eigenvalues $\pm1$ with multiplicity two, so $\tfrac12(\1\pm\vn\cdot\ve)$ are rank-two projectors and $\tr(\rho\,\tfrac12(\1\pm\vn\cdot\ve))=\tfrac12(1\pm\vn\cdot\pi(\rho))=E^\pm_\vn(\pi(\rho))$. With $\vn=\vx/|\vx|$ this gives $|\vx|=\tr(\rho\,\vn\cdot\ve)\le1$, and for $\psi$ in the $+1$ eigenspace of $\vn\cdot\ve$ and $\vm\perp\vn$, anticommutation gives $\bra\psi\vm\cdot\ve\ket\psi=0$, so $\pi(\psi)=\vn$ and mixtures fill the ball. Finally the $e_ie_j$ span $\sog(4)$, and $\exp(-\theta e_ie_j)e_k\exp(\theta e_ie_j)$ rotates $e_i,e_j$ by $2\theta$ and fixes the rest: this is $\Spin(4)\cong SU(2)\times SU(2)\to SO(4)$, sitting inside $\Spin(5)\cong\mathrm{Sp}(2)\subset U(4)$ with $\mathrm{Sp}(2)$ the $2\times2$ quaternionic unitaries. Conjugation by $e_i$ sends $e_j\mapsto-e_j$ for $j\neq i$.
\end{proof}

\begin{proof}[Proof of \cref{thm:kdrange}]
$e_i,e_j,G_{ij}$ pairwise anticommute, so $x_i^2+x_j^2+g_{ij}^2\le1$ as in the proof of \cref{prop:projection}. For attainment use the factorisation $\C^4=\C^2\otimes\C^2$ with $e_i=\sigma_x\otimes\1$, $e_j=\sigma_y\otimes\1$, $G_{ij}=\sigma_z\otimes\1$, $e_k=\sigma_z\otimes\sigma'_x$, $e_l=\sigma_z\otimes\sigma'_y$ for the remaining indices, primed operators acting on the second factor; any four anticommuting involutions on $\C^4$ are unitarily equivalent to these. A product state with Bloch vectors $\mathbf a,\mathbf b$ has $x_i=a_x$, $x_j=a_y$, $g_{ij}=a_z$, $x_k=a_zb_x$, $x_l=a_zb_y$. Given $\vx$, take $a_z=\pm\sqrt{1-x_i^2-x_j^2}$ and $\mathbf b=(x_k,x_l,0)/a_z$, a valid Bloch vector since $x_k^2+x_l^2\le1-x_i^2-x_j^2$; the two signs give the same $\vx$ with opposite $g_{ij}$, and mixing them with weights $p,1-p$ gives $g_{ij}=(2p-1)\sqrt{1-x_i^2-x_j^2}$.
\end{proof}

\begin{proof}[Proof of \cref{prop:compositeprops}]
Local tomography holds because the observables are spanned by products, and $25=5\times5$; no-signalling because $e_i\otimes\1$ and $\1\otimes f_j$ commute. The extreme points of the trace-norm unit ball are the rank-one matrices $\mathbf u\mathbf v^{\mathsf T}$ with unit vectors, so the convex hull of $\{\vx\vy^{\mathsf T}:|\vx|,|\vy|\le1\}$ is that ball. The $e_i\otimes f_i$ commute, square to $\1$, and have product $e_5\otimes f_5$; their joint $+1$ eigenspace is one-dimensional. For any orthogonal $O$, $\sum_i(O\ve)_i\otimes f_i$ is again a sum of four commuting involutions, so $\tr(O^{\mathsf T}C)\le4$. Marginals vanish because $e_i\otimes\1$ anticommutes with $e_j\otimes f_j$ for $j\neq i$. The CHSH value follows from $C=\1_4$; the bound holds because $\vn\cdot\ve\otimes\1$ and $\1\otimes\vm\cdot\vf$ are commuting Hermitian involutions on a four-qubit state \cite{Tsirelson1980}.
\end{proof}

\begin{proof}[Proof of the bound in \cref{eq:sbound}]
Let $\mu_k$, $\mu_K$ be the ontic distributions of an antipodal pair and $\nu$ that of the maximally mixed state. Preparation noncontextuality turns the hexagon's equivalences into $\mu_k+\mu_K=2\nu$ for each pair and $\mu_a+\mu_b+\mu_c=3\nu$. Where $\nu$ vanishes so do all six $\mu$, so writing $\mu_k=2\nu u_k$ defines $u_k\in[0,1]$ up to a $\nu$-null set, with $\mu_K=2\nu(1-u_k)$ and $\sum_ku_k=\tfrac32$. With response functions $\xi_k$,
\begin{equation*}
\begin{aligned}
s&=\tfrac13\textstyle\int\nu\sum_k\big[\xi_ku_k+(1-\xi_k)(1-u_k)\big]\\
&\le\tfrac13\textstyle\int\nu\sum_k\max(u_k,1-u_k),
\end{aligned}
\end{equation*}
and $\sum_k\max(u_k,1-u_k)=\tfrac32+\sum_k|u_k-\tfrac12|$. The deviations $u_k-\tfrac12$ sum to zero and lie in $[-\tfrac12,\tfrac12]$, so their absolute values sum to at most $1$ and $s\le\tfrac56$.
\end{proof}

\section{Symmetries of the composite}
\label{app:local}

\begin{lemma}
\label{lem:pauli}
Let $\mathcal V$ be the real span of a set $S$ of Pauli operators on any number of qubits, closed under sign. A Hermitian $H=\sum_kh_kP_k$ in the Pauli basis satisfies $i[H,v]\in\mathcal V$ for all $v\in\mathcal V$ if and only if $h_k=0$ for every $P_k$ that anticommutes with some $v\in S$ with $iP_kv\notin\mathcal V$. The Lie algebra of unitaries preserving $\mathcal V$ is spanned by the $iP$ with $P$ Pauli such that, for every $v\in S$, $[P,v]=0$ or $iPv\in\mathcal V$.
\end{lemma}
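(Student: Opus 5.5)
The proof is a direct computation in the Pauli basis. It rests on one fact: for Pauli operators $P,v$, the commutator $[P,v]$ is either $0$ or $2Pv$, and $Pv$ is itself a Pauli operator up to a phase.

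Write $H=\sum_kh_kP_k$ with real $h_k$. Then
\[
i[H,v]=\sum_k h_k\, i[P_k,v]=\sum_{k:\,\{P_k,v\}=0}2h_k\,iP_kv .
\]
Here each $iP_kv$ is a Hermitian Pauli operator up to sign, because $P_k$ and $v$ anticommute. For fixed $v$, distinct $P_k$ give distinct Paulis $P_kv$ up to sign, since right multiplication by $v$ is a bijection on Paulis modulo phase. The terms in the sum are therefore linearly independent.

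Next I use the structure of $\mathcal V$. Since $\mathcal V$ is spanned by Pauli operators and is closed under sign, a real combination of distinct Paulis lies in $\mathcal V$ iff every Pauli with nonzero coefficient lies in $\mathcal V$ up to sign. This holds because the Paulis form a basis, so membership in $\mathcal V$ is decided coordinatewise. It follows that $i[H,v]\in\mathcal V$ iff $h_k=0$ for every $P_k$ that anticommutes with $v$ and has $iP_kv\notin\mathcal V$.

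By linearity in $v$, it suffices to check $v\in S$. Requiring the condition for all $v\in S$ gives exactly the stated criterion, which proves the first claim.

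For the Lie algebra claim, a unitary one-parameter group $e^{-itH}$ preserves $\mathcal V$ under conjugation iff its generator satisfies $i[H,\mathcal V]\subseteq\mathcal V$. The forward direction follows by differentiating at $t=0$. The reverse direction holds because $\mathrm{ad}_{iH}$ leaves the finite-dimensional space $\mathcal V$ invariant, so its exponential does too. By the first part, the admissible $H$ form the real span of those $P$ which, for every $v\in S$, either commute with $v$ or satisfy $iPv\in\mathcal V$. Multiplying these $H$ by $i$ gives the Lie algebra of the preserving unitaries, which is the set spanned by the $iP$ in the statement, including $i\1$.

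The only point needing care is the linear independence of the terms $iP_kv$. Without it, cancellations between different $P_k$ could place $i[H,v]$ in $\mathcal V$ even though individual coefficients are nonzero. The bijectivity of multiplication by $v$ rules this out.
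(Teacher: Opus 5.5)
Your proof is correct and follows the paper's argument: $i[P_k,v]$ is $0$ or $2iP_kv$, right multiplication by $v$ permutes the Pauli basis up to phase so the coefficients $2h_k$ sit at distinct positions, and membership in $\mathcal V$ is then decided coordinatewise. You also spell out two steps the paper leaves implicit, the reduction from $\mathcal V$ to $S$ by linearity and the differentiate/exponentiate step for the Lie algebra, while the paper's proof adds a sketch of the multiplicity-space case $H=\sum_kP_k\otimes M_k$, used later in the paper and not required by the lemma as stated.
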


\begin{proof}
For fixed $v\in S$, $i[P_k,v]$ is $0$ or $2iP_kv$, Hermitian in the second case, and $P_k\mapsto P_kv$ is a bijection of the Pauli basis up to phases. So the Pauli components of $i[H,v]$ are the numbers $2h_k$ at the distinct positions $iP_kv$, and $i[H,v]\in\mathcal V$ holds if and only if each position outside $\mathcal V$ carries $h_k=0$. With multiplicity spaces, $H=\sum_kP_k\otimes M_k$, the same argument forces $M_k=0$ for a failing $P_k$ and $M_k\propto\1$ for a passing $P_k$ that anticommutes with some $v$; only the identity commutes with all of $\mathcal O_{4\otimes4}$.
\end{proof}

For $\mathcal V=\mathcal O_{4\otimes4}$ the Pauli operators passing the test are the identity and the twelve giving the local bivectors $e_ie_j\otimes\1$ and $\1\otimes f_kf_l$, and no others, which is \cref{thm:localonly}.

The same test shows that a realisation with multiplicity, $e_i\otimes\1_m$ on $\C^4\otimes\C^m$, adds only generators on the multiplicity spaces, which do not change the effective statistics, so the conclusion is realisation-independent; and that enlarging the local observable space does not help, since adding the fifth question on both sides (symmetry algebra of dimension $21$), adding the generators as well ($13$), adding everything but the fifth question ($15$), or replacing one side by a full pair of qubits ($22$) all leave that algebra local. For a product generator $A\otimes f_j$ the obstruction can be seen by hand: acting on $e_i\otimes f_k$ ($k\neq j$) it produces $\{A,e_i\}\otimes f_jf_k$, whose second factor has degree two, so $\{A,e_i\}$ must vanish for all $i$, forcing $A\propto e_5$; and then $[e_5,e_i]$ has degree three, outside $\mathcal O_4$. A qubit encoded as $\{\sigma_i\otimes X\}$ hits the same obstruction, once the generators commuting with the whole span are set aside, and one encoded as $\{\sigma_i\otimes\1\}$ does not.

\end{document}